\newcommand\reallywidehat[1]{%
\savestack{\tmpbox}{\stretchto{%
  \scaleto{%
    \scalerel*[\widthof{\ensuremath{#1}}]{\kern.1pt\mathchar"0365\kern.1pt}%
    {\rule{0ex}{\textheight}}
  }{\textheight}%
}{2.4ex}}%
\stackon[-6.9pt]{#1}{\tmpbox}%
}
\title{Explorations in Subexponential Non-associative Non-commutative Linear Logic}
\author{Eben Blaisdell\institute{University of Pennsylvania, USA}\email{ebenb@sas.upenn.edu}
    \and
    Max Kanovich\institute{University College London, UK}\email{m.kanovich@ucl.ac.uk}
    \and 
    Stepan L. Kuznetsov\thanks{The work of Kuznetsov was supported within the framework of HSE University Basic Research Program and by the the Theoretical Physics and Mathematics Advancement Foundation ``BASIS.''}\institute{Steklov Mathematical Institute of RAS, Russia \\ HSE University, Russia}\email{stephan.kuznetsov@gmail.com}
    \and
    Elaine Pimentel\thanks{Pimentel has received funding from the European Union's Horizon 2020 research and innovation programme under the Marie Sk\l odowska-Curie grant agreement Number 101007627.}
    \institute{University College London, UK}
    \email{e.pimentel@ucl.ac.uk}
    \and Andre Scedrov\institute{University of Pennsylvania, USA}
    \email{scedrov@math.upenn.edu}
    }
\begin{document}
\maketitle

\begin{abstract} 
In~\cite{DBLP:conf/cade/BlaisdellKKPS22} we introduced a non-associative non-commutative linear logic extended by multimodalities, called subexponentials, licensing local application of structural rules. Here, we further explore this system, considering its classical one-sided multi-succedent classical version, following the exponential-free calculi of~\cite{DBLP:conf/lacl/Buszkowski16} and~\cite{DBLP:journals/sLogica/GrooteL02}, where  the intuitionistic calculus is shown to embed faithfully into the classical fragment. 
\end{abstract}

\section{Introduction}

Following the work of Ajdukiewicz~\cite{Ajdukiewicz} and Bar-Hillel~\cite{Hillel} on categorial grammars, Lambek introduced two versions of non-commutative logical calculi intended to capture grammaticality in natural languages: the associative version~\cite{Lambek1958} and the non-associative version~\cite{Lambek1961OnTC}. In~\cite{Abrusci90} Abrusci  showed  that the associative Lambek calculus corresponds to a non-commutative version of multiplicative intuitionistic linear logic~\cite{Girard:1987uq}.  

Classical linear logic~\cite{Girard:1987uq} is a resource conscious logic, in the sense that
formulae are consumed when used during proofs, unless marked with the modalities $\quest$ and $\bang$ (called  {\em exponentials}).  Formulae marked with such exponentials behave \emph{classically}, so that classical and intuitionistic logics' behaviours can be captured in linear logic.
As it turns out, exponentials are not canonical, in the sense that even having the same scheme for introduction rules, marking the exponentials with different labels (\eg\ $\nbang{i},\nquest{i}$ for $i$ in a set of {\em labels}) does not preserve equivalence, that is, 
$\nbang{i}F\not\equiv\nbang{j}F$ and $\nquest{i}F\not\equiv\nquest{j}F$ if $i\not=j$. This allows for the introduction of a (possibly infinite) set of connectives, called {\em subexponentials}.

Extensions of linear logic/Lambek calculi with subexponentials are considered in~\cite{DBLP:conf/cade/KanovichKNS18,DBLP:journals/mscs/KanovichKNS19,DBLP:conf/cade/BlaisdellKKPS22}. In~\cite{DBLP:conf/cade/KanovichKNS18} a cut-admissible framework for subexponentials in a non-commutative {\em intuitionistic} linear logic was introduced. A  {\em classical} version of this system was presented in~\cite{DBLP:journals/mscs/KanovichKNS19}, showing that, via an appropriate embedding, one can view the `classical' system as a conservative extension of the `intuitionistic' system.
 
In~\cite{DBLP:conf/cade/BlaisdellKKPS22} we extended the work in~\cite{DBLP:conf/cade/KanovichKNS18} by proposing $\acLL$, a non-associative analogue of the previous system. In the present work, we introduce the cut-admissible calculus $\CacLL$ for classical non-associative non-commutative multi-modal linear logic
and show that this is conservative over $\acLL$. 
%

It should be noted that such conservativity results are quite unusual, as they do 
not hold for richer logics which enjoy more structural rules for arbitrary formulae. For example, while classical logic can be adequately represented in the intuitionistic logic via \eg\ G\"oedel-Gentzen double-negation translation~\cite{Avigad1998-FEFOFD,DBLP:journals/corr/abs-1101-5442}, it is easy to see that the other direction has no truth-preserving propositional encodings. In fact, if there were a faithful translation from (propositional) intuitionistic logic into (propositional) classical logic, there would exist a finite matricial decision procedure (\eg\ a truth table) for propositional intuitionistic logic.
The crucial difference between classical logic and substructural ones is that
derivability, in a classical system, of intuitionistically invalid principles requires, besides {\em tertium non datur}, also structural rules. Such principles include, \eg, Peirce's law or Grishin axiom ($A \to (B \vee C) \Rightarrow (A \to B) \vee C$). 

In the substructural setting, the discussion on the conservativity of `classical' systems over `intuitionistic' systems dates back to Schellinx's observation of the analogous result for linear logic with an appropriate choice of connectives~\cite{DBLP:journals/logcom/Schellinx91} (see also~\cite{DBLP:conf/lics/Laurent18}). More specifically, Schellinx proved that fragments of classical linear logic in the language of intuitionistic linear logic are conservative if and only if they do not include the constant $\zero$, or do not include the linear implication. 

This asymmetry is broken if (full powered) subexponentials are added to the system: in~\cite{DBLP:conf/csl/Chaudhuri10} Chaudhuri showed that the conservativity result holds for linear logic with subexponentials. 

In systems not considering the additive constant $\zero$, conservativity is maintained in the absence of other substructural features. For example, in associative non-commutative setting (without subexponentials) an embedding of a `classical' substructural system over an `intuitionistic' one was discussed in~\cite{Pentus1998}. The same holds if subexponentials are added to the scene (see~\cite{DBLP:journals/mscs/KanovichKNS19}). Finally, in the non-associative non-commutative setting, De Groote and Lamarche showed the analogous  
result with exactly the binary multiplicative connectives~\cite{DBLP:journals/sLogica/GrooteL02}.

The present work combines all the aforementioned results, proving that conservativity of `classical'  over the `intuitionistic' holds in the substructural non-associative non-commutative multimodal framework, when the additive constant $\zero$ is not present.
This is a relevant outcome, since conservativity results allow transferring linguistic applications of the intutionistic system to the classical one: both positive (derivability for correct sentences) and negative (non-derivability for incorrect ones) information is preserved. Moreover, if the types for words in a categorial grammar are formulae in the intuitionistic system, then the intuitionistic system is sufficient for all derivations we might need. Therefore, possible {\em new} applications of the classical system in linguistics (that is, applications for which the intuitionistic system is insufficient) would necessarily require using of essentially `classical' types, \ie, formulae which are not translations of intuitionistic ones. We leave the search for such possible applications for further research.

The motivation of the present paper is in the line of De Groote and Lamarche~\cite{DBLP:journals/sLogica/GrooteL02}. As noticed above, in the pure substructural setting the classical system is richer than the intuitonistic one. Symmetries latent in the intuitionistic presentation are now fully observed. Thus, considering a classical counterpart of the intuitionistic system $\acLL$ becomes a theoretical requirement.

The calculi $\acLL$ and $\CacLL$, being non-associative systems, require quite sophisticated structure in sequents. Namely, sequents involve not just sets, multisets, or sequences, but tree-like structures of formulae. This shows some connection to Display Logic~\cite{DBLP:journals/jphil/Belnap82a,DBLP:conf/csl/CloustonDGT13}, which also describe various substructural logics via complex structures of sequents and `display postulates' for different variations of associativity and commutativity. We plan to investigate these connections in future work. 

The rest of the paper is organized as follows. Section~\ref{sec:CaLL} brings a short introduction to linear logic and subexponentials, presents the system $\CacLL$ for classical non-associative non-commutative multi-modal linear logic together with a proof of cut-admissibility of the system. Section~\ref{sec:emb} presents the embedding of $\acLL$ into 
$\CacLL$, showing that one can view the `classical' system as a conservative extension of the `intuitionistic' system. It is also shown that, as in~\cite{DBLP:journals/logcom/Schellinx91,DBLP:journals/mscs/KanovichKNS19}, adding the zero constant is enough for destroying the conservativity. Section~\ref{sec:conc} concludes the paper by pointing some future directions, including a discussion about a focused system. Indeed, since the structural rules are circular, $\CacLL$ is not adequate for proof search. But they can be ``tamed" by eliminating the application of structural rules over structures, and 
restricting the application of the structural subexponential rules to {\em neutral formulae}, in the same way as done in~\cite{DBLP:conf/fossacs/GheorghiuM21} for contraction.

\section{The Classical System}\label{sec:CaLL}
Classical linear logic ($\LL$~\cite{Girard:1987uq}) is a resource conscious logic, in the sense that
formulae are consumed when used during proofs,
unless 
marked with the exponential $\quest$ (whose dual is $\bang$).  Formulae marked with $\quest$ behave \emph{classically}, \ie, they can be contracted 
and weakened 
during proofs.   Propositional $\LL$ connectives include the
additive  conjunction $\with$ and disjunction $\oplus$ and their multiplicative 
versions $\otimes$ and $\lpar$, together with their units: 

\par\nobreak\bgroup%
\begin{tikzpicture}[node distance=1ex]
  \node [matrix of math nodes] (gr) {
    F, G, \dotsc & ::= &
    \node(a){A}; & \mid & \node(tens){F \otimes G}; & \mid & \mathsf{1} & \mid &
    \node(plus){F \oplus G}; & \mid & \mathsf{0} &
    \mid & \node(bang){\mathsf{!} F}; \\
    & \node[right] {\mid} ; &
    \node(a'){ A^\perp}; & \mid &
    \node(par){F \bindnasrepma G}; & \mid & \node(bot){\bot}; & \mid &
    \node(with){F \binampersand G}; & \mid & \node(top){\top}; &
     \mid & \node(qm){\mathsf{?} F}; \\
  } ;
  \node at ($(bang.north east)!.5!(qm.south east)+(1.8,0)$) {
    \refstepcounter{equation}
    \label{eq:gram}
  } ;
  \begin{scope}[on background layer]
    \fill[rounded corners,color=green!5!white]
       ($(a.north west)-(.2,0)$) rectangle ($(a'.south east)-(0,.5)$) ;
    \node at ($(a'.south west)!.5!(a'.south east)-(0.05,.2)$){
      \tiny\scshape literals
    } ;
    \fill[rounded corners,color=blue!5!white]
       (tens.north west) rectangle ($(bot.south east)-(0,.5)$) ;
    \node at ($(par.south west)!.5!(bot.south east)-(0,.2)$) {
      \tiny\scshape multiplicatives
    } ;
    \fill [rounded corners,color=red!5!white]
       (plus.north west) rectangle ($(top.south east)-(0,.5)$) ;
    \node at ($(with.south west)!.5!(top.south east)-(0,.2)$) {
      \tiny\scshape additives
    } ;
    \fill [rounded corners,color=cyan!10!white]
       (bang.north west) rectangle ($(qm.south east)+(.2,-.5)$) ;
    \node[align=flush left] at ($(qm.south west)!.5!(qm.south east)+(.15,-.2)$) {
      \tiny\scshape exp.
    } ;
  \end{scope}
\end{tikzpicture}
\egroup\par\nobreak\noindent
\setcounter{equation}{0}
Note that $(\cdot)^\perp$ (negation) has atomic scope. For an arbitrary formula $F$, $F^\perp$ denotes the result of moving negation inward until it has atomic scope. 
We shall refer to atomic ($A$) and negated atomic  ($A^\bot$) formulae as {\em literals}. 
The connectives in the first line denote the de Morgan dual of the connectives in the second line. Hence,   for atoms $A,B$, the expression $(\bot \with (A \otimes (! B)))^\perp$ denotes
$\one \oplus (A^\perp \lpar (\quest B^\perp))$. 

As is usual for non-associative systems, we consider binary trees of formulae.  To `classicalize', we choose a one-sided sequent and a singular involutive `tight' negation.

\begin{definition}[Structured sequents]
{\em Structures}  include the empty structure $\varnothing$, formulae, or pairs containing structures:
\[
\begin{array}{lcl}
\Gamma & ::= & \varnothing \mid {} F \mid (\Gamma,\Gamma).
\end{array}
\]

Structures are considered up to the following equivalences, which wipe out empty substructures inside a bigger structure: $(\varnothing,\Gamma)$ and $(\Gamma,\varnothing)$ are the same as $\Gamma$.
Thus, any non-empty structure may be regarded as a rooted binary tree whose leaves are labelled with formulae.

%

A \emph{context with several holes}, $\Rx{}\ldots\Ex{}$ is obtained from a structure by replacing designated occurrences of formulae with empty placeholders. Given a context with holes, we write $\Rx{\Delta_1}\ldots\Ex{\Delta_n}$ for the structure which is obtained from $\Rx{}\ldots\Ex{}$ by replacing the placeholders with structures $\Delta_1$, \ldots, $\Delta_n$ (in the given order).

Some of the $\Delta_i$ are allowed to be empty. In this case, the corresponding placeholder is just removed: $(\varnothing,\Gamma)$ and $(\Gamma,\varnothing)$ are replaced by $\Gamma$, and this operation is performed recursively.

A {\em structured sequent} (or simply {\em sequent}) has the form $\seq\Gamma$ where $\Gamma$ is a non-empty structure.
\end{definition}

The rules for the structured system for classical non-associative non-commutative linear logic are depicted in Figure~\ref{fig:FNL}. This is an extension of the system presented in~\cite{DBLP:journals/sLogica/GrooteL02} with the additive connectives.

\begin{figure}[t]
{\sc Propositional rules}
\[
\infer[\tensor]{\seq((\Gamma,\Delta),F\tensor G)}{\seq \Gamma,G &
\seq \Delta,F}
\qquad
\infer[\parr]{\seq\Rx{F\parr G}}{\Rx{(F,G)}}
\]
\[
\infer[\oplus_i]{\seq\Rx{F_1\oplus F_2}}{\seq\Rx{F_i}}
\qquad
\infer[\with]{\seq\Rx{F_1\with F_2}}{\seq\Rx{F_1} &
\seq\Rx{F_2}}
\]
\[
\infer[\perp]{\seq\Rx{\perp}}{\seq\Rx{}}
\qquad
\infer[\one]{ \seq \one}{}
\qquad
\infer[\top]{ \seq \Rx{\top}}{}
\]
{\sc Structural Rules}
\[
\infer[\E]{\seq (\Gamma,\Delta)}{\seq (\Delta,\Gamma)}
\qquad
\infer[\A1]{\seq ((\Gamma,\Delta),\Pi)}{\seq (\Gamma,(\Delta,\Pi))}
\qquad
\infer[\A2]{\seq (\Gamma,(\Delta,\Pi))}{\seq ((\Gamma,\Delta),\Pi)}
\]
{\sc Initial and cut rules}
\[
\infer[\init]{\seq (A,A^{\perp})}{}
\qquad
\infer[\cut]{\seq (\Gamma,\Delta)}{\seq (\Gamma,A) & \seq (A^{\perp},\Delta)}
\]
\caption{Structured system for classical non-associative non-commutative linear logic ($\CNL$).}\label{fig:FNL}
\end{figure}

The structural rules need some clarification.
At the first glance, they look like the rules of commutativity (exchange) and associativity, which could have ruined the whole idea of building a non-associative non-commutative logic. It is important to notice, however, that these rules allow exchange and associativity only {\em on the top level:} e.g., one cannot obtain $\seq ((\Gamma, \Delta), \Psi)$ from $\seq ((\Delta, \Gamma), \Psi)$. This means that the structural rules are a non-associative analogue of cyclic shifts (as in cyclic linear logic~\cite{Yetter}). The level of structural flexibility provided by these rules is discussed below in Section~\ref{sect:equivalence}.

Similar to modal connectives, the exponentials $\bang,\quest$ in $\LL$ are not {\em canonical}~\cite{danos93kgc}, in the sense that if  $i\not= j$ then
$\nbang{i}F\not\equiv\nbang{j}F$ and $\nquest{i}F\not\equiv\nquest{j}F$.
Intuitively, this means that we can mark the exponential with {\em labels} taken from a set $I$ organized in a pre-order $\preceq$ (\ie, reflexive and transitive), obtaining (possibly infinitely-many) exponentials ($\nbang{i},\nquest{i}$ for $i\in I$).
Also as in multi-modal systems, the pre-order  determines the provability relation: 
for a general formula $F$, $\nbang{b}F$ {\em implies} $\nbang{a}F$ iff $a \preceq b$.

Originally~\cite{nigam10jar}, subexponentials could assume only weakening and contraction axioms:
\[
\C:\;\; \nbang{i} F \limp \nbang{i} F
     \otimes \nbang{i} F \qquad \W:\;\; \nbang{i} F \limp \one  
\]
In~\cite{DBLP:conf/cade/KanovichKNS18,DBLP:journals/mscs/KanovichKNS19}, non-commutative systems allowing commutative subexponentials were presented:
\[ \mathsf{E}:\;\; (\nbang{i} F)\otimes G \equiv G\otimes(\nbang{i} F) 
\]
 In~\cite{DBLP:conf/cade/BlaisdellKKPS22}, we went one step further and presented a non-commutative, non-associative linear logic based system with the possibility of assuming associativity
%
\[ \mathsf{A1}:\;\; \nbang{i} F\otimes(G\otimes H) \equiv (\nbang{i} F\otimes G)\otimes H
     \qquad \mathsf{A2}:\;\; (G\otimes H)\otimes \nbang{i} F \equiv G\otimes (H\otimes\nbang{i} F) 
  \]
as well as commutativity and other structural properties. In this paper, we present the classical version of this system.
 
We start by presenting an adaption of simply dependent multimodal linear logics ($\mathsf{SDML}$) appearing in~\cite{DBLP:conf/lpar/LellmannOP17} to the non-associative/commutative case.
The language of non-commutative $\mathsf{SDML}$ is that of (propositional) linear logic
with subexponentials~\cite{DBLP:journals/mscs/KanovichKNS19}.  

\begin{definition}[SDML]\label{def:sdmls}
Let $\mathcal{A}$ be a set of axioms. A (non-associative non-commutative) {\em simply dependent multimodal logical system} ($\mathsf{SDML}$) is given by a triple $\Sigma=(I,\cless,f)$, where $I$ is a set of indices, $(I,\cless)$ is a pre-order, and $f$ is a mapping from $I$ to $2^{\mathcal{A}}$. 
  
If $\Sigma$ is a $\mathsf{SDML}$, then the \emph{logic described by} $\Sigma$ has the modalities $\nbang{i},\nquest{i}$ for every $i \in I$, with the rules of non-associative non-commutative linear logic, together with rules for the axioms $f(i)$ and the interaction axioms $\nbang{j} A\limp \nbang{i} A$ for every $i,j \in I$ with $i \cless j$.

Finally, every $\SDML$ is assumed to be upwardly closed w.r.t. $\preceq$, that is, if $i\preceq j$ then $f(i)\subseteq f(j)$ for all $i,j\in I$.\footnote{This requirement is needed for proving cut-admissibility of the correspondent sequent systems (see~\cite{danos93kgc}).}
\end{definition}

The structured system $\CacLL$ is determined by the system $\CNL$ and the rules in Figure~\ref{fig:CacLL}. $\CacLL$ is the logic described by the $\SDML$ determined by $\Sigma$, with $\mathcal{A}=\{\C,\W,\A1,\A2,\E\}$ where, in the subexponential rule for $\mathsf{S}\in\mathcal{A}$, the respective $s\in I$ is such that $\mathsf{S}\in f(s)$ (\eg\ the subexponential symbol $e$ indicates that $\E\in f(e)$). We will denote by $\nynot{\mathsf{Ax}}\Delta$ the fact that the structure $\Delta$ contains only formulae with top-level  as leaves, each of them assuming the axiom $\mathsf{Ax}$.

As an economic notation, we will write $\upset{i}$ for the \emph{upset} of
the index $i$, \ie, the set $\{j \in I : i \cless j\}$.  We extend this notation to structures in the following way. Let $\Gamma$ be a structure containing only question-marked formulae as leaves.
 If such formulae admit the multiset partition 
 \[
 \{ \nquest{j}F\in\Gamma: i \cless j\}\cup \{ \nquest{k}F \in\Gamma: i \not\cless k\mbox{ and }\W\in f(k)\}
 \]
then $\Gamma^{\upset{i}}$ is the structure obtained from $\Gamma$ by erasing the formulae in the second component of the partition (equivalently, the substructure of $\Gamma$ formed with all and only formulae of the first component of the partition). 
 Otherwise, $\Gamma^{\upset{i}}$ is undefined.

\begin{example}\label{ex:preceq}
Let $\Gamma=(\nquest{i}A,(\nquest{j}B,\nquest{k}C))$ be represented below left, $i\preceq j$ but $i\not\preceq k$, and $\W\in f(k)$. Then $\Gamma^{\upset{i}}=(\nquest{i}A,\nquest{j}B)$ is depicted below right
\begin{center}
\begin{tikzpicture}[level distance=2em, 
			level 1/.style={sibling distance=5em},
			level 2/.style={sibling distance=3em},
			every node/.style = {align=center}]]
			\node {,}
			child[thick] { node {$\nquest{i}A$}}
			child[thick] { node {,}
				child[thick] { node {$\nquest{j}B$}}
				child[thick] {node {$\nquest{k}C$}}};
		\end{tikzpicture}
 \qquad
\begin{tikzpicture}[level distance=2em, 
			level 1/.style={sibling distance=5em},
			level 2/.style={sibling distance=3em},
			every node/.style = {align=center}]]
			\node {,}
			child[thick]{node {$\nquest{i}A$}}
			child[thick] { node {$\nquest{j}B$}};
		\end{tikzpicture}
\end{center}
Observe that, if $\W\notin f(k)$, then $\Gamma^{\upset{i}}$ cannot be built. In this case, any derivation of  $\seq(\Gamma,\nbang{i}C)$ cannot start with an application of the promotion rule, similarly to how promotion in $\LL$ cannot be applied in the presence of non-classical contexts. 
\end{example}

\begin{figure}[t]
{\sc Subexponential rules}
\[\infer[\prom]{\seq (\Gamma,\nbang{i}F)}{\seq(\Gamma^{\upset{i}},F)}
\qquad
\infer[\der]{\seq\Rx{\nynot{i}F}}{\seq \Rx{F}}
\]
{\sc Subexponential Structural rules}
\[
\infer[\nynot{}\mathsf{A}1]{\seq((\Delta_1,(\Delta_2,\Delta_3)),\nynot{a1}\Gamma)}{
    \seq(((\Delta_1,\Delta_2),\Delta_3),\nynot{a1}\Gamma)
}
\qquad
\infer[\nynot{}\mathsf{A}2]{\seq(((\Delta_1,\Delta_2),\Delta_3),\nynot{a2}\Gamma)}{
    \seq((\Delta_1,(\Delta_2,\Delta_3)),\nynot{a2}\Gamma)
}
\qquad
\infer[\nynot{}\E]{\seq((\Delta_1,\Delta_2),\nynot{e}\Gamma)}{
\seq((\Delta_2,\Delta_1),\nynot{e}\Gamma)
}
\]
\[
\infer[\nynot{}\W]{\seq\Rx{\nynot{w}\Delta}}{\seq\Rx{}}
\qquad
\infer[\nynot{}\C]{\seq \Rx{}\ldots\Ex{\nquest{c}\Delta}\ldots\Ex{}}{\seq \Rx{\nquest{c}\Delta}\ldots\Ex{\nquest{c}\Delta}}
\]
\caption{Structured system $\CacLL$ for the logic described by $\Sigma$.}\label{fig:CacLL}
\end{figure}

Notice that $\Gamma$, in general, cannot be uniquely computed from $\Gamma^{\upset{i}}$. Indeed, when going from $\Gamma^{\upset{i}}$ to $\Gamma$, one may non-deterministically add formulae of the form ${?}^k C$, where $\W \in f(k)$. Thus, an application of the $\prom$ rule may implicitly contain several applications of the weakening rule $\nynot{}\W$.

\subsection{Structural Equivalence}\label{sect:equivalence}
\newcommand{\desi}[1]{\reallywidehat{#1}}

The structural rules of $\CacLL$ make structures very flexible. 
This flexibility may be pinned by choosing a more sophisticated structures for sequents than rooted binary trees. This corresponds to the move from linearly to circularly ordered sequences of formulae, which is used in the associative case~\cite{Yetter}. In the non-associative situation, this transformation is a bit trickier. Namely, following~\cite{DBLP:journals/sLogica/GrooteL02}, we may represent (non-empty) structures as acyclic graphs (unrooted trees) where each vertex has degree~1 or~3. Vertices of degree~1 are leaves, and they are labelled by individual formulae. Vertices of degree~3 are inner nodes. For each inner node, the cyclic order of its neighbors is maintained.
%
Such graphs with cyclic order on inner nodes are called {\em unrooted cyclically-ordered-neigbor 3-regular trees with leaves}~\cite{DBLP:journals/sLogica/GrooteL02}.


There are many structures equivalent to any given structure, but if we designate a subtree to appear in a particular position then there is a unique representation. This can intuitively described as a keychain with many layers: selecting one for opening a door changes the arrangement of the keys, but not their position in the keychain. 
In this paper, we choose to present all proofs in terms of sequents, and we choose the first right branch of the structure as this designated position.
This is equivalent to flipping the structure around to put a formula into spot, in the same way we would select a key in a keychain.

This works due to the following definitions and technical lemmas (the proofs are in the accompanying technical report~\cite{blaisdell2023explorations}).

\begin{definition}
We define $\sim$, called \emph{structural equivalence}, between two structures to be the reflexive, symmetric, transitive closure of

\begin{align*}
    (\Gamma,\Delta) &\sim (\Delta,\Gamma) \\
    (\Gamma,(\Delta,\Pi)) &\sim ((\Gamma,\Delta),\Pi)
\end{align*}
That is, $\Theta\sim \Xi$ if and only if $\Xi$ is achievable from $\Theta$ using only the structural rules $(\E),(\A1),$ and $(\A2)$.
%
%
\end{definition}

%
%

Note that for the tree representation in~\cite{DBLP:journals/sLogica/GrooteL02}, this corresponds to choosing a particular edge in the graph.
We formalize this by defining the following.

\begin{definition}
For a context with a hole $\Rx{}$, we define the \emph{designated} structure $\desi{\Rx{*}}$ inductively by the following:

\begin{align*}
\desi{(\Rx{*},\Delta)} &:\equiv \desi{(\Delta,\Rx{*})} \\
\desi{(\Gamma,(\Delta,\Px{*}))} &:\equiv \desi{((\Gamma,\Delta),\Px{*})} \\
\desi{(\Gamma,(\Dx{*},\Pi))} &:\equiv \desi{((\Pi,\Gamma),\Dx{*})} \\
\desi{(\Gamma,*)} &:\equiv \Gamma \\
\end{align*}
To ensure that this definition is complete, we also specify the empty case $\desi{*}:\equiv\cdot$.
We call the overline in the notation $\desi{\Rx{*}}$ the \em{designator}.
\end{definition}
Observe that the designator is well defined.
Indeed, first note that the left hand sides are cumulatively exhaustive.  If $*$ is on the left, it is handled by the first case.  If it is on the right, it is either immediately to the right or it is on the right side's left or right branch.  These are handled by the fourth, second, and third cases respectively.
Note also that the left hand sides are mutually exclusive.
Finally, note that recursive application of this definition terminates, because each case places $*$ on the right branch and the depth of $*$ on the right decreases in every subsequent case.

The following lemma shows that this definition indeed gives us an equivalent structure that puts the designated subtree on the right.
\begin{lemma}[Correctness of Designator]
For any structure $\Tx{\Xi}$ with distinguished subtree, we have
\[
(\desi{\Tx{*}},\Xi)\sim\Tx{\Xi}.
\]
\end{lemma}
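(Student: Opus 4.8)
The plan is to prove the statement by structural induction on the context with a hole $\Tx{*}$, following the inductive structure of the definition of $\desi{\Tx{*}}$. Since the two definitions (the designator $\desi{\cdot}$ and the structural equivalence $\sim$) are set up in parallel, each case of the designator definition should match a case of the induction, and in each case we reduce to a smaller context and invoke the inductive hypothesis, using one or two applications of the generating clauses of $\sim$ (namely $(\Gamma,\Delta)\sim(\Delta,\Gamma)$ and $(\Gamma,(\Delta,\Pi))\sim((\Gamma,\Delta),\Pi)$, together with reflexivity, symmetry and transitivity) to bridge the gap.

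Concretely, I would argue as follows. In the base case $\Tx{*} \equiv (\Gamma,*)$, the definition gives $\desi{(\Gamma,*)} \equiv \Gamma$, and we must check $(\Gamma,\Xi)\sim(\Gamma,\Xi)$, which is reflexivity; the degenerate base case $\desi{*}\equiv\cdot$ is handled by the convention that an empty substructure is wiped out, so $(\cdot,\Xi)$ is literally $\Xi$. For the inductive step there are three cases, exactly mirroring the first three clauses of the designator definition. If $\Tx{*}\equiv(\Rx{*},\Delta)$, then $\desi{\Tx{*}}\equiv\desi{(\Delta,\Rx{*})}$, and we need $(\desi{(\Delta,\Rx{*})},\Xi)\sim(\Rx{\Xi},\Delta)$; by the inductive hypothesis applied to the context $(\Delta,\Rx{*})$ we get $(\desi{(\Delta,\Rx{*})},\Xi)\sim(\Delta,\Rx{\Xi})$, and one application of $(\Gamma,\Delta)\sim(\Delta,\Gamma)$ finishes. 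If $\Tx{*}\equiv(\Gamma,(\Delta,\Px{*}))$, then $\desi{\Tx{*}}\equiv\desi{((\Gamma,\Delta),\Px{*})}$; the inductive hypothesis on $((\Gamma,\Delta),\Px{*})$ gives $(\desi{((\Gamma,\Delta),\Px{*})},\Xi)\sim((\Gamma,\Delta),\Px{\Xi})$, and the associativity clause $(\Gamma,(\Delta,\Pi))\sim((\Gamma,\Delta),\Pi)$ (used with $\Pi:=\Px{\Xi}$) yields $((\Gamma,\Delta),\Px{\Xi})\sim(\Gamma,(\Delta,\Px{\Xi}))=\Tx{\Xi}$, and we conclude by transitivity. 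The remaining case $\Tx{*}\equiv(\Gamma,(\Dx{*},\Pi))$, with $\desi{\Tx{*}}\equiv\desi{((\Pi,\Gamma),\Dx{*})}$, is the one requiring the most care: here the inductive hypothesis gives $(\desi{((\Pi,\Gamma),\Dx{*})},\Xi)\sim((\Pi,\Gamma),\Dx{\Xi})$, and I would then chain an exchange $(\Pi,\Gamma)\sim(\Gamma,\Pi)$, a commutation of the inner pair, and the associativity clause to reach $(\Gamma,(\Dx{*},\Pi))[\Xi]$ — concretely $((\Pi,\Gamma),\Dx{\Xi})\sim((\Gamma,\Pi),\Dx{\Xi})\sim(\Gamma,(\Pi,\Dx{\Xi}))\sim(\Gamma,(\Dx{\Xi},\Pi))$, each step a single generating move, closed under transitivity.

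The main obstacle I anticipate is the last inductive case: getting the precise sequence of $\E/\A1/\A2$ moves right so that the distinguished hole stays tracked through the commutations without accidentally permuting the wrong pair (recall the structural rules only act "on the top level," so each rewrite must be justified at the root of the relevant subtree). A secondary subtlety is the bookkeeping around empty structures: because $(\varnothing,\Gamma)$ and $(\Gamma,\varnothing)$ are identified, I need to make sure the argument degrades gracefully when $\Gamma$, $\Delta$, or $\Pi$ is empty, e.g. when $\desi{*}\equiv\cdot$ feeds into an outer case — but since structures are taken up to that identification from the outset, the equivalences above remain literally valid, so no separate treatment is needed beyond a remark. Termination of the recursion (depth of $*$ on the right strictly decreases) has already been observed in the excerpt, so the induction is well-founded.
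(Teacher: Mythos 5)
Your overall strategy---induction following the recursive unfolding of the designator, one case per defining clause---is the natural one and matches the style indicated in the paper, and your base cases and the cases $\Tx{*}\equiv(\Rx{*},\Delta)$ and $\Tx{*}\equiv(\Gamma,(\Delta,\Px{*}))$ are fine: there the bridging moves are a single exchange or re-association applied at the root, which are genuine generators of $\sim$. The gap is in your last case. Recall that $\sim$ is generated only by \emph{top-level} applications of $(\E)$, $(\A1)$, $(\A2)$; the paper explicitly warns that, e.g., $((\Delta,\Gamma),\Psi)$ is \emph{not} reachable from $((\Gamma,\Delta),\Psi)$. In your chain for $\Tx{*}\equiv(\Gamma,(\Dx{*},\Pi))$, the first step $((\Pi,\Gamma),\Dx{\Xi})\sim((\Gamma,\Pi),\Dx{\Xi})$ is an exchange performed inside the left subtree, and the last step $(\Gamma,(\Pi,\Dx{\Xi}))\sim(\Gamma,(\Dx{\Xi},\Pi))$ is an exchange inside the right subtree; neither is an instance of the generators, so these steps are invalid as written. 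Your own caveat that ``each rewrite must be justified at the root of the relevant subtree'' repeats the misunderstanding: the rules act only at the root of the whole structure, never at the root of a proper subtree.

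The case itself is still correct, and the repair is short: from $((\Pi,\Gamma),\Dx{\Xi})$ apply $(\E)$ at the root to get $(\Dx{\Xi},(\Pi,\Gamma))$, then $(\A1)$ at the root to get $((\Dx{\Xi},\Pi),\Gamma)$, then $(\E)$ at the root to get $(\Gamma,(\Dx{\Xi},\Pi))\equiv\Tx{\Xi}$---three moves, each at top level, chained with the inductive hypothesis by transitivity. With this substitution (and with the induction measure stated explicitly, e.g.\ the number of recursive unfoldings of the designator, which is finite by the termination remark in the text), your argument goes through.
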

%
%
%
%
%
%
%
%
%

While the above lemma says that we can designate any substructure as the one that should appear on the right, the following says that this happens uniquely.

\begin{lemma}
If $\Tx{*}\sim\Xx{*}$, then $\desi{\Tx{*}}\equiv\desi{\Xx{*}}$.
\end{lemma}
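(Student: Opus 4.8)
The plan is to reduce the statement to a single rewrite step and then induct on the length of the equivalence derivation. First I would regard the hole $*$ as a fresh formula occurring exactly once, so that a one-hole context $\Rx{*}$ is simply a structure with a distinguished leaf and the hypothesis $\Tx{*}\sim\Xx{*}$ asserts exactly that these two structures are connected by a finite chain of the generating moves $(\E),(\A1),(\A2)$ applied at the top level. Since each such move preserves the multiset of leaves, every structure occurring along the chain still contains the hole exactly once, hence is of the form $\Rx{*}$ for a unique one-hole context $\Rx{*}$; consequently it suffices to prove the single-step claim: if $\Xx{*}$ arises from $\Tx{*}$ by one generating move, then $\desi{\Tx{*}}\equiv\desi{\Xx{*}}$. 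The lemma then follows by induction on the length of the chain together with reflexivity, symmetry and transitivity of $\equiv$. One could instead combine the Correctness of Designator lemma with a cancellation law $(\Gamma,*)\sim(\Delta,*)\Rightarrow\Gamma\equiv\Delta$, but establishing that law requires essentially the same computation, so I would argue directly.

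For the single-step claim there are two shapes to consider. For $(\E)$ I would write $\Tx{*}=(\Lambda_1,\Lambda_2)$ and $\Xx{*}=(\Lambda_2,\Lambda_1)$; whichever $\Lambda_i$ carries the hole, the defining clause $\desi{(\Rx{*},\Theta)}\equiv\desi{(\Theta,\Rx{*})}$ applies to one side and identifies it with the other. For associativity I would write $\Tx{*}=(\Gamma,(\Delta,\Pi))$ and $\Xx{*}=((\Gamma,\Delta),\Pi)$ and split on which of $\Gamma,\Delta,\Pi$ contains the hole, writing that component as $\Rx{*}$. If the hole is in $\Pi$, the second defining clause rewrites $\desi{(\Gamma,(\Delta,\Rx{*}))}$ directly to $\desi{((\Gamma,\Delta),\Rx{*})}\equiv\desi{\Xx{*}}$. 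If the hole is in $\Delta$, the third clause gives $\desi{\Tx{*}}\equiv\desi{((\Pi,\Gamma),\Rx{*})}$, while on the other side the first clause followed by the second gives $\desi{\Xx{*}}\equiv\desi{(\Pi,(\Gamma,\Rx{*}))}\equiv\desi{((\Pi,\Gamma),\Rx{*})}$, so the two agree. If the hole is in $\Gamma$, the first clause gives $\desi{\Tx{*}}\equiv\desi{((\Delta,\Pi),\Rx{*})}$, while the first clause followed by the third gives $\desi{\Xx{*}}\equiv\desi{(\Pi,(\Rx{*},\Delta))}\equiv\desi{((\Delta,\Pi),\Rx{*})}$, so again the two agree. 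The degenerate subcases in which a component is the bare hole are subsumed: each unfolding simply terminates at the base clause $\desi{(\Theta,*)}\equiv\Theta$. Finally $(\A2)$ is the converse of $(\A1)$, and since the claim is an \emph{equality} of designators it is symmetric, so no separate case is needed.

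The main obstacle I expect is the bookkeeping in the associativity case: one must match the defining clauses of $\desi{\cdot}$ to the correct position of the hole (in the left subtree, or right-of-right, or left-of-right of the root) and check that the two unfoldings land on literally the same expression rather than merely structurally equivalent ones. Once the clauses are lined up in this way the computation is routine, and no graph-theoretic machinery is required --- although the statement is precisely the uniqueness of re-rooting an unrooted cyclically-ordered-neighbour $3$-regular tree with leaves at a chosen edge, which is its conceptual content.
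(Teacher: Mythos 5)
Your proof is correct and takes essentially the approach the paper intends for this lemma: reduce $\sim$ to single applications of the top-level generators $(\E)$, $(\A1)$, $(\A2)$, induct on the length of the chain, and in each case check by case analysis on the position of the hole that unfolding the defining clauses of $\desi{\cdot}$ on both sides lands on the same term (relying on the previously established well-definedness of the designator). The degenerate and symmetric cases are handled appropriately, so nothing further is needed.
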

%
%
%
%
%
%
%
%
%

\begin{corollary}[Uniqueness]
If $(\Gamma,\Pi)\sim(\Delta,\Pi)$ both contain a distinguished occurrence of $\Pi$, then $\Gamma\equiv\Delta$ (and thus $(\Gamma,\Pi)\equiv(\Delta,\Pi)$ as well).
\end{corollary}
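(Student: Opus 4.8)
The plan is to deduce this directly from the preceding uniqueness lemma for designators together with the base clause of the definition of $\desi{-}$. Given structures with $(\Gamma,\Pi)\sim(\Delta,\Pi)$, where the displayed $\Pi$ is the distinguished occurrence on both sides, I would first form the one-hole contexts $\Tx{*}:\equiv(\Gamma,*)$ and $\Xx{*}:\equiv(\Delta,*)$ obtained by deleting the distinguished $\Pi$ and leaving the hole in its place.

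Next I would observe that $\Tx{*}\sim\Xx{*}$ as contexts. Any witnessing sequence of applications of $(\E)$, $(\A1)$, $(\A2)$ transforming $(\Gamma,\Pi)$ into $(\Delta,\Pi)$ while preserving the distinguished occurrence of $\Pi$ only rearranges the tree above that occurrence and never inspects the internal shape of $\Pi$; hence the same sequence, read with $*$ in place of the distinguished $\Pi$, witnesses $\Tx{*}\sim\Xx{*}$. (When $\Gamma$, and hence $\Delta$, is empty this degenerates to $*\sim *$, which is admissible.) Applying the lemma ``if $\Tx{*}\sim\Xx{*}$ then $\desi{\Tx{*}}\equiv\desi{\Xx{*}}$'' then gives $\desi{(\Gamma,*)}\equiv\desi{(\Delta,*)}$.

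It remains to evaluate the two designators. By the fourth defining clause, $\desi{(\Gamma,*)}:\equiv\Gamma$ and $\desi{(\Delta,*)}:\equiv\Delta$ (with the degenerate case handled by $\desi{*}:\equiv\varnothing$), so $\Gamma\equiv\Delta$. Finally, since $\equiv$ is a congruence for the pairing constructor (it is identity of structures up to wiping out empty substructures), $\Gamma\equiv\Delta$ yields $(\Gamma,\Pi)\equiv(\Delta,\Pi)$, as claimed.

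The only genuinely delicate point is the lifting in the second step---making precise that ``distinguished occurrence'' licenses us to regard the $\sim$-derivation as acting on the context $(\Gamma,*)$ rather than literally on $(\Gamma,\Pi)$. But this is precisely the convention under which the two preceding designator lemmas were stated, so it requires no new argument; everything else is routine bookkeeping with the defining clauses of $\desi{-}$.
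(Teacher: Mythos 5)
Your proposal is correct and matches the intended derivation: the corollary is exactly the instance of the preceding lemma with the contexts $(\Gamma,*)$ and $(\Delta,*)$ obtained by putting the hole at the distinguished occurrence of $\Pi$, followed by evaluating both designators via the clause $\desi{(\Gamma,*)}:\equiv\Gamma$. Your remark on the ``delicate point'' is also the right reading of the hypothesis --- the distinguished occurrence must be tracked through the $\sim$-derivation (otherwise the statement would fail, e.g.\ $((A,B),A)\sim((B,A),A)$ with the displayed rightmost $A$'s not corresponding), which is precisely the convention under which the lemma applies.
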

%
%

Finally we present another helpful technical lemma which's proof is a straightforward induction on the definition of the designator.
\begin{lemma}[Independent Substructure Preservation]
If $\Delta$ is a substructure of $\Rx{\Delta}\{*\}$ (that does not contain $*$), then $\Delta$ is a substructure of $\desi{\Rx{\Delta}\{*\}}$, and further replacing $\Delta$ by $\Pi$ in $\desi{\Rx{\Delta}\{*\}}$ yields $\desi{\Rx{\Pi}\{*\}}$.
\end{lemma}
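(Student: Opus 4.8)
The plan is to induct on the recursive definition of the designator $\desi{\Rx{*}}$, following exactly the four (plus empty) cases of that definition, and to carry the stronger inductive hypothesis that simultaneously asserts (i) $\Delta$ remains a substructure after designation and (ii) substituting $\Pi$ for (that copy of) $\Delta$ commutes with the designation operation. The key observation making this work is that in each clause of the definition, the rewriting performed on the context is an exchange $(\E)$ or associativity move $(\A1,\A2)$ applied at the top level, and these moves only rearrange the branches $\Gamma,\Delta,\Pi,*$ around the root; any substructure $\Delta$ that lives entirely inside one of these branches and does not contain the hole $*$ is transported intact into the new arrangement. So at each step the hypothesis that $\Delta$ is a $*$-free substructure of the current context is preserved, and the claim follows by the induction.

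Concretely, first I would handle the base case $\desi{(\Gamma,*)} :\equiv \Gamma$ (and the degenerate $\desi{*}:\equiv\cdot$): here $\Delta$ is a substructure of $(\Gamma,*)$ not containing $*$, hence $\Delta$ is a substructure of $\Gamma$ itself, and replacing it by $\Pi$ gives $\Rx{\Pi}\{*\} = (\Gamma',*)$ with $\Gamma'$ the result of that replacement, whose designator is exactly $\Gamma'$ — so both parts hold trivially. For the inductive clauses, take for instance $\desi{(\Rx{*},\Delta')} :\equiv \desi{(\Delta',\Rx{*})}$: a $*$-free substructure $\Delta$ of $(\Rx{*},\Delta')$ is either inside $\Delta'$ or inside the $*$-free part of $\Rx{*}$; in either case it is a $*$-free substructure of $(\Delta',\Rx{*})$, so the inductive hypothesis applies to the latter and delivers both conclusions, and one checks that replacing $\Delta$ by $\Pi$ on the left-hand side produces precisely the context to which the right-hand side's inductive hypothesis was applied. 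The two associativity clauses $\desi{(\Gamma,(\Delta',\Px{*}))} :\equiv \desi{((\Gamma,\Delta'),\Px{*})}$ and $\desi{(\Gamma,(\Dx{*},\Pi'))} :\equiv \desi{((\Pi',\Gamma),\Dx{*})}$ are treated the same way: a $*$-free substructure $\Delta$ sits inside exactly one of $\Gamma$, $\Delta'$/$\Pi'$, or the $*$-free part of the context with the hole, and each of these regions survives the reassociation (and, in the third clause, the exchange of $\Pi'$ with $\Gamma$) as an identical substructure of the rewritten context; then invoke the induction.

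The main obstacle — really the only delicate point — is bookkeeping the case split cleanly, i.e.\ verifying that in every clause the substructure $\Delta$ genuinely falls within one of the branches being shuffled and never straddles the root in a way that the top-level move would disturb. This is guaranteed because $\Delta$ is assumed $*$-free while the clause is always triggered by the position of $*$, so $\Delta$ and $*$ occupy disjoint subtrees and $\Delta$ is confined to a single branch; one must simply be careful, when $\Delta$ could coincide with or contain one of the displayed branches $\Gamma$, $\Delta'$, $\Pi'$, to note that the argument still goes through since those branches are themselves moved wholesale. With that case analysis in hand the two conclusions are immediate from the inductive hypothesis, and since (as already observed after the definition) the recursion terminates, the induction is well-founded and the lemma follows.
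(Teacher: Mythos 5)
Your proposal is correct and follows essentially the same route as the paper, which proves this lemma by a straightforward induction on the recursive definition of the designator, observing in each clause that the top-level rearrangement moves branches wholesale and so transports any $*$-free substructure intact. Nothing further is needed.
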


Finally, from now on derivations will be considered {\em modulo} designator, in the sense that the operations for determining the designator are not really performed: they should be seen simply as a handy representation of formulae, and not as syntactic manipulations over them. Under this view, we write
\[
\infer=[\sim]{\seq \Gamma}{\seq \Delta}
\quad
\text{for}
\quad
\Gamma\sim\Delta.
\]
only as a convenient representation, not a formal inference rule. 

\newcommand{\cc}{\#_{\C}} 
\newcommand{\rc}{\#_{\R}} 

\subsection{Cut Elimination}
We end this section by presenting the sketch of the proof of admissibility of the cut rule in $\CacLL$. The complete proof is in~\cite{blaisdell2023explorations}.

\begin{theorem}
If a sequent $\seq\Gamma$ is provable in $\CacLL$, then there is a proof in which the $\cut$ rule is not applied.
\end{theorem}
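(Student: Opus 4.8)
The plan is to prove cut-elimination by a standard double-induction argument, adapted to the non-associative structured setting. The proof proceeds by defining a suitable complexity measure on cuts and showing that any topmost application of $(\cut)$ can be replaced by one or more cuts of strictly smaller measure, so that iterating the procedure terminates in a cut-free proof. The natural measure is lexicographic: first the size (number of connective occurrences) of the cut formula $A$, and second — to handle the contraction rule $\nynot{}\C$, which duplicates an entire substructure and hence potentially multiple copies of the cut formula — some notion of ``cut rank'' counting the number of contractions above the cut on the relevant side, in the spirit of the (commented-out) $\cc$ and $\rc$ measures hinted at in the excerpt. Concretely, I would assign to a cut a pair $(|A|, h)$ where $h$ accounts for the contraction count in the two subderivations, weighted so that pushing a cut above a contraction strictly decreases $h$ while keeping $|A|$ fixed.

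The core of the argument is a case analysis on the last rules of the two premises $\seq(\Gamma,A)$ and $\seq(A^\perp,\Delta)$ of the cut. First I would handle the \emph{principal} cases, where both premises introduce the cut formula: here $A = F \otimes G$ against $A^\perp = G^\perp \parr F^\perp$, $A = F \oplus G$ against $A^\perp = F^\perp \with G^\perp$, $A = \one$ against $\bot$, $A = \top$ against $\zero$ (vacuous, since $\zero$ has no right rule — indeed this is why $\zero$ is excluded elsewhere in the paper), $A = \nbang{i}F$ against $A^\perp = \nquest{i}F^\perp$, and the $(\init)$ base case. In each principal case the cut on $A$ is replaced by cut(s) on the smaller subformulae $F$, $G$; the $\otimes/\parr$ case requires the structural rules $(\E),(\A1),(\A2)$ together with the Correctness of Designator lemma to re-bracket $(\Gamma,\Delta)$ appropriately, and the $\nbang{i}/\nquest{i}$ case uses the definition of $\Gamma^{\upset i}$ and, crucially, the upward-closure requirement on $f$ (together with the subexponential structural rules available at the relevant index) to justify that the residual contexts still carry the needed axioms. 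Then I would handle the \emph{commutative} cases, where at least one premise ends in a rule not acting on the cut formula: the cut is permuted upward past that rule. For the propositional and structural rules of $\CNL$ this is routine given the context-with-holes machinery; for the subexponential structural rules $\nynot{}\A1, \nynot{}\A2, \nynot{}\E, \nynot{}\W$ one permutes similarly; the delicate subcase is permuting a cut above $\nynot{}\C$, which duplicates the cut formula — this is precisely where the secondary component $h$ of the measure drops, so that the two resulting cuts (each on the same formula $A$ but with strictly fewer contractions above) are smaller in the lexicographic order. The $\nynot{}\W$ case may delete the cut formula entirely, trivializing the cut. Throughout, when a cut is permuted above a rule that rearranges the structure (including the implicit $\sim$-steps), I would invoke the Independent Substructure Preservation lemma to guarantee that the cut formula, together with its context, is transported coherently through the designator operations.

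I would organize the presentation around the subderivation whose last rule is analyzed: by symmetry of the one-sided calculus ($A$ and $A^\perp$ play interchangeable roles under negation, since $(A^\perp)^\perp \equiv A$), one only needs to treat each case up to this symmetry, which roughly halves the work. The key lemmas used are exactly the four technical lemmas of Section~\ref{sect:equivalence} (Correctness of Designator, the uniqueness lemma, its corollary, and Independent Substructure Preservation), which let us treat structures modulo $\sim$ and freely move a distinguished substructure into the designated position — this is what makes the non-associativity tractable in the cut-permutation steps.

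\textbf{The main obstacle} I anticipate is the interaction between the cut rule and the subexponential contraction rule $\nynot{}\C$, combined with the non-associative structure. Unlike in the associative/sequent-list setting, contracting $\nquest{c}\Delta$ duplicates a whole \emph{substructure}, possibly deep inside a tree and possibly containing many copies of the cut formula; permuting the cut past this requires cutting against each copy and then re-contracting, and one must verify that the complexity measure genuinely decreases and that the $\sim$-equivalences needed to line up the duplicated substructures are available. A secondary subtlety is the promotion case against a context $\Gamma^{\upset i}$: one must check that after the cut is resolved, the residual context of the form $\Gamma^{\upset i}$ is still of the right shape — i.e.\ all its leaves are $\nquest{}$-marked with indices in $\upset i$ or carrying $\W$ — which relies on the cut formula itself being appropriately marked (it is, being $\nbang{i}F$ on one side and hence $\nquest{i}F^\perp$ on the other) and on upward-closure of $f$. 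I would flag these two points as the places where the bulk of the technical care goes, and refer to~\cite{blaisdell2023explorations} for the full verification.
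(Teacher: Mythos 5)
Your overall scaffolding (induction primarily on the complexity of the cut formula, case analysis into principal and commutative cases, use of the designator lemmas to tame the tree structure) matches the paper's, but your treatment of the one case you yourself flag as the main obstacle --- a cut whose $\nquest{c}$-marked cut formula sits inside a structure duplicated by $\nynot{}\C$ --- does not go through as described, and this is exactly where the paper's proof does something you are missing. Your plan is to ``cut against each copy and then re-contract,'' with termination guaranteed by a secondary contraction-counting component of the measure. But in this substructural setting the transformation does not even produce a derivation of the right end-sequent: after you cut $\seq(\Gamma,\nbang{c}F)$ against one copy of $\nquest{c}F^{\perp}$ inside the contracted substructure, that substructure now contains $\Gamma$ in place of a $\nquest{c}$-leaf, so the contraction $\nynot{}\C$ can no longer be re-applied to it; and if you instead cut against every copy, the conclusion contains several occurrences of $\Gamma$, which cannot be contracted because contraction is only licensed for $\nquest{c}$-marked structures, not for arbitrary $\Gamma$. (The same problem appears, less dramatically, in your weakening case: recovering $\seq(\Gamma,\Delta)$ from $\seq\Delta$ needs weakening on $\Gamma$, which is not generally available.) So the difficulty is not one of finding a measure that decreases --- it is that the naive permutation step is unsound, and no lexicographic bookkeeping fixes that.

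The paper's proof resolves this by introducing a generalized cut, the $(\mix)$ rule, which cuts a single $\nbang{c}A^{\perp}$ against a sequent containing arbitrarily many occurrences of $\nquest{c}A$ and erases all of them at once; $(\mix)$ is interderivable with $(\cut)$ plus several $(\C)$ steps (using $\C\in f(c)$), and $(\cut)$ and $(\mix)$ are then eliminated simultaneously by nested induction, first on the complexity of the cut formula (counting the $\nquest{c}$ in the $(\mix)$ case) and then on the depth of the $(\cut)$/$(\mix)$ application. With this device, permuting past $\nynot{}\C$ merely enlarges the set of occurrences absorbed by the $(\mix)$ while decreasing depth, and the duplication problem only has to be faced when the $\nbang{c}$-side is principal in a promotion, at which point the context consists of $\nquest{k}$-formulae with $c\preceq k$, so upward closure of $f$ supplies the contractions and weakenings needed. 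To repair your argument you would need either this mix/multicut device or an explicit argument that the cut is always pushed up the $\nbang{c}$-side until promotion before the contraction is dealt with; as written, the step ``two resulting cuts, each with strictly fewer contractions above'' is not available.
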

\begin{proof}
We prove cut elimination in a standard syntactic way.  Following \eg\ \cite{DBLP:journals/mscs/KanovichKNS19}  we introduce the following $(\mix)$ rule, and simultaneously eliminate $(\cut)$ and $(\mix)$.

\[
\infer[\mix]{\seq(\Gamma,\Delta\Ex{}\cdots\Ex{})}{
    \seq(\Gamma,\nbang{c}A^{\perp}) &
    \seq(\nynot{c}A,\Delta\Ex{\nynot{c}A}\cdots\Ex{\nynot{c}A})
}
\]
Note that $(\mix)$ is equivalent to a $(\cut)$ followed by (possibly several) applications of $(\C)$, which can be applied since we assume that $\C\in f(c)$.

It is sufficient to prove the claim for one application of $(\cut)$ or $(\mix)$, and we prove this restricted claim jointly for $(\cut)$ and $(\mix)$ by nested induction on $\kappa$, first on the complexity of the $(\cut)$ formula, and then on $\delta$ the depth of the $(\cut)$ or $(\mix)$ application.  Here, we include the $\nynot{c}$ in the cut formula complexity in the case of $(\mix)$.

In each considered case we modify the proof to either remove the $(\cut)$ or $(\mix)$, decrease the complexity of the cut formula, or decrease the depth while maintaining the complexity.
\end{proof}

\newcommand{\trans}[1]{\widehat{#1}}
\newcommand{\ntrans}[1]{\widehat{#1}^{\perp}}

\section{Embedding}\label{sec:emb}
Embedding a classical system with involutive negation into its intuitionistic version is often a matter of finding a ``good translation'' (such as Gentzen-G\"odel's double negation~\cite{DBLP:journals/corr/abs-1101-5442}). The other way around may be tricky, though, sometimes even impossible without collapsing provability to the target logic.

In this section, we will show an embedding 
of the intuitionistic system $\acLL$\footnote{Please refer to~\cite{DBLP:conf/cade/BlaisdellKKPS22} for the rules of the sequent system $\acLL$. In a nutshell,  $\acLL$ is a two-sided version of $\CacLL$, with sequents containing structures as antecedent and a single formula in the succedent. Moreover, the connectives are restricted to: $\with,\oplus, \otimes, \to,\la, \top, \nbang{i},\one$, where $\to,\la$ are the non-commutative linear implications.} into the classical system $\CacLL$, with the same $\SDML$ signature.

Consider the translation \;$\trans{\cdot}$\; on formulae defined below.
\[
\begin{array}{rclcrcl}
    \trans{p} & :\equiv & p & \hspace{0.75in} &
    \trans{A\tensor B} & :\equiv & \trans{A}\tensor\trans{B} \\
    \trans{A\to B} & :\equiv & \trans{A}^{\perp}\parr\trans{B} &&
    \trans{B\la A} & :\equiv & \trans{B}\parr\trans{A}^{\perp} \\
    \trans{A\with B} & :\equiv & \trans{A}\with\trans{B} &&
    \trans{A\oplus B} & :\equiv & \trans{A}\oplus\trans{B} \\
    \trans{\nbang{i}A} & :\equiv & \nbang{i}\trans{A} &&
    \trans{\one} & :\equiv & \one\\
    \trans{\top} & :\equiv & \top
\end{array}
\]
This translation is extended this to structures by the following:
\[
\ntrans{(\Gamma,\Delta)}:\equiv (\ntrans{\Delta},\ntrans{\Gamma})
\]

Note both that the order is reversed by the tight negation and also that we will only every need the negative translation for structures.

We will show this embedding is faithful if no subexponentials license associativity.

\begin{theorem}
If for all labels $i$ in the signature $\Sigma$ we have $f(i)\subseteq \{\C,\W,\E\}$, then an $\acLL$ sequent $\Gamma\seq A$ is provable iff $\seq (\ntrans{\Gamma},\trans{A})$ is provable in $\CacLL$.
\end{theorem}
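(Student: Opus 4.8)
The plan is to prove both directions separately, each by induction on the height of a cut-free derivation (we may assume cut-freeness by the cut-elimination theorem just proved for $\CacLL$, and the analogous result for $\acLL$ established in~\cite{DBLP:conf/cade/BlaisdellKKPS22}). For the forward direction ($\acLL \vdash \Gamma \seq A$ implies $\CacLL \vdash {\seq}(\ntrans{\Gamma}, \trans{A})$), I would traverse the $\acLL$-derivation rule by rule and show that each intuitionistic rule is simulated by $\CacLL$ rules acting on the translated sequent. The key observations making this work are: the translation sends the intuitionistic right-implication rule (introducing $A \to B$ on the right from $A$ in the antecedent) to the $\CacLL$ $\parr$-rule introducing $\trans{A}^\perp \parr \trans{B}$, using that $\ntrans{A} = \trans{A}^\perp$ for a formula $A$ placed on the left; the two left-implication rules of $\acLL$ are simulated using $\tensor$ together with $\init$ on the literal-free residual; and the $\nbang{i}$-promotion rule of $\acLL$ matches $\prom$ in $\CacLL$ because the translation commutes with $\nbang{i}$ and, crucially, the antecedent in the promotion premise of $\acLL$ consists only of $\nbang{i}$-marked formulae, whose negative translation is a structure of $\nquest{i}$-marked formulae to which $(\cdot)^{\upset{i}}$ applies trivially. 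The structural rules of $\acLL$ (exchange $\E$, and, when licensed, the subexponential structural rules for $\C,\W,\E$) map to the corresponding structural and subexponential structural rules of $\CacLL$ on the negatively-translated antecedent; here the hypothesis $f(i) \subseteq \{\C,\W,\E\}$ guarantees no associativity-licensing subexponential appears, so we never need $\nynot{}\mathsf{A}1$ or $\nynot{}\mathsf{A}2$, which would be problematic since $\acLL$ has no associativity at all.

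For the backward direction ($\CacLL \vdash {\seq}(\ntrans{\Gamma}, \trans{A})$ implies $\acLL \vdash \Gamma \seq A$), the standard technique is to show that the image of the translation is closed under the $\CacLL$ rules in a way that can be ``read backwards'' into $\acLL$. Concretely, I would prove: if a cut-free $\CacLL$-derivation ends in a sequent of the form $\seq (\ntrans{\Gamma}, \trans{A})$, then by analyzing which rule was applied last and where its principal formula sits — noting that every formula occurring in such a sequent is either of the form $\trans{B}$ (in positive position) or $\ntrans{B} = \trans{B}^\perp$ (inside $\ntrans{\Gamma}$), and that the shape of translated formulae is constrained (e.g. $\trans{A \to B}^\perp = \trans{A} \tensor \trans{B}^\perp$, so a $\tensor$ appearing negatively in a translated sequent can only have come from a translated implication) — the last rule corresponds to an $\acLL$ rule, and the premises are again translated sequents to which the induction hypothesis applies. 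The one-sidedness is handled by tracking polarity: a formula is ``on the left'' of the intended two-sided sequent exactly when it occurs inside $\ntrans{\Gamma}$, i.e. under an even/odd number of the structural negations, and because $\CacLL$'s negation is involutive and syntactic this polarity is preserved under subformulas in the expected way.

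The main obstacle I anticipate is the backward direction, specifically handling the $\CacLL$-specific structural flexibility — the top-level exchange/associativity rules $\E,\A1,\A2$ and the $\prom$ rule — which have no direct counterpart in $\acLL$. For the structural rules $\E, \A1, \A2$ acting on $\ntrans{\Gamma}$: since $\ntrans{}$ reverses order, a $\CacLL$ exchange on the translated antecedent corresponds to an $\acLL$ exchange $\E$, and I must check that $\A1, \A2$ applied to a translated structure can be mirrored — this is where the ``designator'' machinery and the Uniqueness corollary from Section~\ref{sect:equivalence} should be invoked, to argue that working modulo $\sim$ the ambiguity in how $\Gamma$ is reconstructed from the structure does not matter, and that the $\acLL$-antecedent is recovered up to $\acLL$-provable-equivalence. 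For $\prom$: I must show that if $\CacLL$ proves $\seq ((\ntrans{\Gamma})^{\upset{i}}, \trans{A})$ and concludes $\seq (\ntrans{\Gamma}, \nbang{i}\trans{A})$, then in the original translated sequent every leaf of $\ntrans{\Gamma}$ that survives $(\cdot)^{\upset{i}}$ is $\nquest{j}\trans{B} = \ntrans{\nbang{j'}B}$-shaped with $i \preceq j$, and every erased leaf is $\nquest{k}\trans{B}$ with $\W \in f(k)$ — hence $\Gamma$ itself consists only of $\nbang{}$-marked formulae and the $\acLL$ promotion rule applies, with the erased weakening-admitting leaves handled by $\acLL$'s weakening (available since $\W \in f(k)$). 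Finally, I would record that the restriction $f(i) \subseteq \{\C,\W,\E\}$ is used in exactly one essential place — ruling out $\nynot{}\mathsf{A}1, \nynot{}\mathsf{A}2$ in the backward simulation — and note (as the paper's later discussion on $\zero$ suggests) that this is the sharp boundary of the result.
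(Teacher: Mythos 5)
There is a genuine gap in your backward (completeness) direction, and it sits exactly at the sentence ``the premises are again translated sequents to which the induction hypothesis applies.'' This is not automatic, and in fact it is false without an extra argument. Consider a one-sided sequent $\seq(\ntrans{\Gamma},\trans{C})$ whose last non-structural rule is $\tensor$ with principal formula $\ntrans{A\to B}=\trans{A}\tensor\ntrans{B}$ (or symmetrically $\ntrans{B\la A}$). The classical $\tensor$ rule is free to split the surrounding context either way, so one possible split puts the unique positive formula $\trans{C}$ into the same premise as $\trans{A}$: then one premise contains two formulae of the form $\trans{\cdot}$ and the other contains none, so neither premise has the shape $\seq(\ntrans{\Gamma'},\trans{C'})$ and your induction hypothesis simply does not apply. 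You cannot map such a case to an $\acLL$ rule; you must show it \emph{cannot occur in a provable derivation}. Your polarity-tracking remark (positive vs.\ negative occurrence) identifies where formulae come from, but provides no reason why the bad context split is excluded.

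The paper closes this hole with a quantitative invariant: an integer counter $\natural$ on formulae (extending Pentus' count, $\natural(\trans{C})=0$, $\natural(\ntrans{C})=1$) together with the lemma that any provable sequent built from translated formulae with $n$ leaves satisfies $\sum\natural = n-1$, hence is \emph{intuitionistically polarizable}, i.e.\ has exactly one positive formula. This is what lets the paper dismiss the bad $\tensor$-splits as unprovable premises and keep the induction going; it is also what guarantees the positive formula is never of the form $\nynot{i}A$, so the subexponential structural rules act only on the negative part (handled, as you anticipated, via the designator and independent-substructure lemmas). Your treatment of soundness, of $\prom$, of the structural rules modulo $\sim$, and your identification of where $f(i)\subseteq\{\C,\W,\E\}$ is needed all match the paper; but without the counting/polarization lemma (or some equivalent device) the completeness induction does not go through, so as written the proposal is incomplete.
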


We start with the easier direction, showing that this embedding is sound.  The embedding is sound, even with the inclusion of associativity.

\begin{lemma}[Soundness]
If an $\acLL$ sequent $\Gamma\seq A$ is provable, then $\seq (\ntrans{\Gamma},\trans{A})$ is provable in $\CacLL$.
\end{lemma}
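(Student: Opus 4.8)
The plan is to prove the Soundness lemma by induction on the height of the given $\acLL$ derivation of $\Gamma \seq A$, showing that each inference rule of $\acLL$ is simulated by a bounded sequence of $\CacLL$ inferences applied to the translated sequent $\seq(\ntrans{\Gamma}, \trans{A})$. First I would record two auxiliary facts about the translation. (i) It commutes with negation in the expected way: $\trans{A}^{\perp}$ is provably equivalent (indeed, for the purposes of one-sided calculus, literally the de Morgan dual up to the fixed negation normal form) to the translation one would assign to $A$ on the "other side"; concretely I want $\ntrans{(\Gamma,\Delta)} = (\ntrans{\Delta},\ntrans{\Gamma})$ to interact correctly with the structural rules, and I want $(\trans{A})^{\perp}$ to equal the formula that plays the role of $A$ as a hypothesis. (ii) The structural rules $(\E),(\A1),(\A2)$ of $\acLL$ on antecedent structures are mirrored, after applying $\ntrans{\cdot}$ (which reverses order inside each pair), by the corresponding structural rules of $\CacLL$ on the translated structure; this is a purely bookkeeping lemma about how order-reversal commutes with exchange and the two associativity directions. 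I would state these as short remarks before the main induction.

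Then I would go rule by rule through $\acLL$. For the multiplicative fragment: the $\acLL$ axiom $p \seq p$ translates to $\seq(p^{\perp}, p)$, an instance of $(\init)$ (up to $\sim$). The right rule for $\otimes$ and the left rule for $\otimes$ translate respectively to the $(\tensor)$ rule and to the $(\parr)$ rule of $\CacLL$ (recalling $\trans{A\otimes B} = \trans{A}\otimes\trans{B}$, while an antecedent occurrence of $A\otimes B$ becomes, under $\ntrans{\cdot}$, a $\parr$ to be split by $(\parr)$). The implication rules are the crucial multiplicative case: the right rule for $\to$ introduces $\trans{A}^{\perp}\parr\trans{B}$ via $(\parr)$ after an exchange that moves $\ntrans{A}$ into place, and the left rule for $\to$ (which has two premises in $\acLL$) is simulated by a $(\tensor)$ that forms $\trans{A}\otimes\trans{B}^{\perp}$ — the dual of $\trans{A\to B}$ — followed by $(\init)$-matching and structural rearrangement; symmetrically for $\la$. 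The additive rules $(\with),(\oplus_i),(\top)$ translate directly to their $\CacLL$ namesakes since $\trans{\cdot}$ commutes with $\with,\oplus,\top$, and the only subtlety is that the context holes $\Rx{\cdot}$ in the $\CacLL$ additive rules must be supplied using the designator machinery (Correctness of Designator) to place the principal formula in the designated right branch. The unit $\one$ and the cut rule of $\acLL$ go to $(\one)$ and $(\cut)$ respectively. For the subexponential fragment under the hypothesis $f(i) \subseteq \{\C,\W,\E\}$: the promotion/dereliction rules of $\acLL$ go to $(\prom)$ and $(\der)$ — here one must check that the side condition making $\Gamma^{\upset{i}}$ defined is met, which uses exactly that all labels present carry $\W$ when needed and that no associativity label intervenes; the structural subexponential rules $\nynot{}\C$, $\nynot{}\W$, $\nynot{}\E$ of $\acLL$ translate to their $\CacLL$ counterparts, again with order reversed inside pairs.

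The main obstacle I anticipate is not any single rule but the interaction between the order-reversing translation $\ntrans{\cdot}$ on structures and the one-sided, cyclic-style structural rules of $\CacLL$: in a two-sided non-associative calculus the antecedent and succedent live in genuinely different syntactic positions, and collapsing them onto one side forces the principal/active formula always to be reachable at the designated (first right branch) position modulo $\sim$. Making this precise requires leaning on the Correctness of Designator and Uniqueness lemmas to argue that whatever rearrangement an $\acLL$ rule performs "inside" a structure can be realized, after translation, by a $\sim$-step followed by the $\CacLL$ rule, and then undone. A secondary technical point is the implication cases, where an $\acLL$ rule with context on both sides of the turnstile becomes a single-sided rule whose context is a pairing of the (reversed) antecedent translation with the succedent translation; one must verify the bracketing matches what $(\tensor)$ and $(\parr)$ produce, which is where the $(\A1),(\A2)$ simulations from remark (ii) get used. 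None of these steps is deep, but the induction hypothesis must be stated carefully enough (quantifying over all contexts and all $\sim$-equivalent presentations) that the designator bookkeeping goes through uniformly; I would isolate that as the one place warranting a detailed argument, and treat the remaining rule-by-rule checks as routine.
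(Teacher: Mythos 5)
Your overall strategy is the paper's own: induction on the $\acLL$ derivation, simulating each rule by a short block of $\CacLL$ inferences on the translated sequent, with the designator/structural-equivalence lemmas doing the bookkeeping that brings the principal material to the designated position modulo $\sim$ (this is exactly how the paper handles, e.g., its displayed $\la L$ and exchange cases). However, there is one concrete defect of scope: you prove the subexponential cases only ``under the hypothesis $f(i)\subseteq\{\C,\W,\E\}$.'' The Soundness lemma carries no such hypothesis --- that restriction belongs to the completeness direction --- and the paper explicitly emphasizes that soundness holds even when subexponentials license associativity. As written, your induction has no case for the $\acLL$ subexponential associativity rules, so it establishes a strictly weaker statement than the lemma. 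The omission is shallow rather than conceptual: those cases follow the same pattern as your $\E$ case, using the $\CacLL$ rules $\nynot{}\mathsf{A}1$ and $\nynot{}\mathsf{A}2$ together with designator rearrangement, but they must be included.

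A smaller inaccuracy: in the promotion case you say the definedness of $\Gamma^{\upset{i}}$ ``uses exactly that all labels present carry $\W$ when needed and that no associativity label intervenes.'' Associativity is irrelevant to that side condition; $\Gamma^{\upset{i}}$ is defined whenever every question-marked leaf either has a label above $i$ in the preorder or has weakening, and in the soundness direction this is guaranteed by the side condition of the $\acLL$ promotion rule itself together with the upward-closure assumption on $f$. Fixing these two points (and carrying out the routine rule-by-rule checks, including the order-reversal details for the multiplicative left rules, which you correctly identify as the delicate bookkeeping) yields essentially the paper's proof.
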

\begin{proof}
We prove this directly be induction on proofs by showing that the translations of each $\acLL$ rule is a valid $\CacLL$ partial proof.  Consider the bottom rule of a proof. We will show some key cases, please refer to~\cite{blaisdell2023explorations} for the full proof.
%
%
%
%
\[
\infer[\la L]
{\Rx{(B\la A,\Delta)}\seq C}
{\Delta\seq A & \Rx{B}\seq C}
\qquad
\rwto
\qquad
\infer=[\sim]{\seq (\Rxn{(\ntrans{\Delta},\trans{A}\tensor\ntrans{B})},C)}{
\infer[\A1]{\seq (\desi{(\Rxn{*},C)},(\ntrans{\Delta},\trans{A}\tensor\ntrans{B}))}{
\infer[\tensor]{\seq ((\desi{(\Rxn{*},C)},\ntrans{\Delta}),\trans{A}\tensor\ntrans{B})}{
    \seq(\ntrans{\Delta},\trans{A}) &
    \infer=[\sim]{\seq(\desi{(\Rxn{*},C)},\ntrans{B})}{
    \seq(\Rxn{\ntrans{B}},C)
    }
}}}
\]
\[
\infer[\E1]{\Rx{(\nbang{e}\Delta,\Pi)}\seq C}{
\Rx{(\Pi,\nbang{e}\Delta)}\seq C
}
\qquad
\rwto
\qquad
\infer[\sim]{\seq(\Rxn{(\ntrans{\Pi},\nynot{e}\ntrans{\Delta})},\trans{C})}{
\infer[\A2]{\seq(\desi{(\Rxn{*},\trans{C})},(\ntrans{\Pi},\nynot{e}\ntrans{\Delta}))}{
\infer[\nynot{}\E]{\seq((\desi{(\Rxn{*},\trans{C})},\ntrans{\Pi}),\nynot{e}\ntrans{\Delta})}{
\infer=[\A1,\E,\A1]{\seq((\ntrans{\Pi},\desi{(\Rxn{*},\trans{C})}),\nynot{e}\ntrans{\Delta})}{
\infer=[\sim]{\seq(\desi{(\Rxn{*},\trans{C})},(\nynot{e}\ntrans{\Delta},\ntrans{\Pi}))}{
\seq(\Rxn{(\nynot{e}\ntrans{\Delta},\ntrans{\Pi})},\trans{C})
}}}}}
\]
\end{proof}
We now prove the more surprising direction, that the embedding of the intuitionistic system into the classical system is complete.

We start by 
proposing a counter on formulae, which is an extension of the counter defined in~\cite{DBLP:conf/rta/KanovichKMS17}, in its turn an extension of the counter in~\cite{Pentus1998}.

\newcommand{\ctr}{\natural}

\begin{definition}
For a $\CacLL$ formula $A$, we define the integer number $\ctr(A)$ by induction as follows.
\[
\begin{array}{rcl c rcl}
    \ctr(p) & := &  0 & \hspace{0.5in} &
    \ctr(A\parr B) & := & \ctr(A) + \ctr(B) - 1 \\
    \ctr(\bar{p}) & := & 1 &&
    \ctr(A\tensor B) & := & \ctr(A)+\ctr(B) \\
    \ctr(\one) & := & 0 &&
    \ctr(A\oplus B) = \ctr(A\with B) & := & \ctr(A) \\
    \ctr(\perp) & := & 1 &&
    \ctr(\nynot{i}A) = \ctr(\nbang{i}A) & := & \ctr(A)
\end{array}
\]
and extend to structures by
\[
\ctr((\Gamma,\Delta)):=\ctr(\Gamma)+\ctr(\Delta)
\]
\end{definition}

We need this counter for the following technical lemmas, which are easily proven by  straightforward induction.

\begin{lemma}
For any $\acLL$ formula $C$, we have $\ctr(\trans{C})=0$ and $\ctr(\ntrans{C})=1$.
\end{lemma}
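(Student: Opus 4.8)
The plan is to prove the two equalities \emph{simultaneously} by induction on the structure of the $\acLL$ formula $C$. They are tightly coupled --- $\ntrans{C}$ is by definition $\trans{C}^{\perp}$, and the defining clauses of $\ctr$ on $\parr$ and $\tensor$ mix a formula with the negation of a subformula (already visible in $\trans{A\to B}\equiv\trans{A}^{\perp}\parr\trans{B}$) --- so I would carry the joint invariant ``$\ctr(\trans{C})=0$ and $\ctr(\trans{C}^{\perp})=1$'' through the induction. Equivalently, one may first isolate the one-line auxiliary fact $\ctr(F^{\perp})=1-\ctr(F)$ for every $\CacLL$ formula $F$ (induction on $F$, using de Morgan at each step), after which the second equality is immediate from the first.

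For the base case $C=p$ we have $\trans{p}\equiv p$ with $\ctr(p)=0$, and $\trans{p}^{\perp}\equiv\bar{p}$ with $\ctr(\bar{p})=1$. The constants $\trans{\one}\equiv\one$ and $\trans{\top}\equiv\top$ are handled by the base clauses of $\ctr$ (taking $\ctr(\top)=0$ and, dually, $\ctr(\zero)=1$ to match $\trans{\top}^{\perp}\equiv\zero$, mirroring $\ctr(\one)=0$, $\ctr(\bot)=1$). For the inductive step I would proceed connective by connective: unfold $\trans{\cdot}$, push the negation inward by de Morgan (using $\trans{A}^{\perp\perp}\equiv\trans{A}$), and then apply the defining clause of $\ctr$ together with the induction hypotheses for the immediate subformulas. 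For example, $\ctr(\trans{A}\tensor\trans{B})=0+0=0$ while its negation $\trans{A}^{\perp}\parr\trans{B}^{\perp}$ has $\ctr=1+1-1=1$; for $\to$, $\ctr(\trans{A}^{\perp}\parr\trans{B})=1+0-1=0$ while its negation $\trans{A}\tensor\trans{B}^{\perp}$ has $\ctr=0+1=1$ (the case $\la$ is symmetric); for $\with$ and $\oplus$ the clause $\ctr(A\with B)=\ctr(A\oplus B):=\ctr(A)$ gives $0$, while $(\trans{A}\with\trans{B})^{\perp}\equiv\trans{A}^{\perp}\oplus\trans{B}^{\perp}$ gives $\ctr(\trans{A}^{\perp})=1$; and $\ctr(\nbang{i}\trans{A})=\ctr(\trans{A})=0$ with $\ctr(\nynot{i}\trans{A}^{\perp})=\ctr(\trans{A}^{\perp})=1$.

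There is no genuine obstacle here; the two points that warrant a second glance are (i) that $\ctr$ is \emph{asymmetric} on the additives (it ignores the second conjunct/disjunct), so one must check --- as above --- that de Morgan duality still yields the predicted value, and (ii) that $\top$ and its formal dual $\zero$ (which can only arise as $\trans{\top}^{\perp}$, and never as the translation of an $\acLL$ formula) receive counter values consistent with the $\one/\bot$ pattern.
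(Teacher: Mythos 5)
Your proposal is correct and matches the paper's approach: the paper proves this lemma (together with its companions) by exactly the same straightforward structural induction, checking each clause of the translation against the defining clauses of $\ctr$. Your side remarks are sound as well --- in particular supplying $\ctr(\top)=0$ (and dually $\ctr(\zero)=1$), which the paper's definition of $\ctr$ leaves implicit, is the right reading and changes nothing else in the argument.
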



\begin{corollary}
A formula cannot be both of the form $\trans{A}$ and $\ntrans{B}$.
\end{corollary}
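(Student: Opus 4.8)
The plan is to leverage the two preceding results—the lemma stating $\ctr(\trans{C}) = 0$ and $\ctr(\ntrans{C}) = 1$ for every $\acLL$ formula $C$—and derive the corollary as an immediate consequence. First I would suppose, for contradiction, that some $\CacLL$ formula $F$ is simultaneously of the form $\trans{A}$ for some $\acLL$ formula $A$ and of the form $\ntrans{B}$ for some $\acLL$ formula $B$. Then by the lemma applied to $C := A$ we get $\ctr(F) = \ctr(\trans{A}) = 0$, while by the lemma applied to $C := B$ we get $\ctr(F) = \ctr(\ntrans{B}) = 1$. Since $0 \neq 1$, this is a contradiction, so no such $F$ exists.

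The only subtlety worth spelling out is that $\ntrans{B}$, as a piece of syntax, is \emph{a priori} defined on structures (via $\ntrans{(\Gamma,\Delta)} :\equiv (\ntrans{\Delta},\ntrans{\Gamma})$) as well as on formulae, so I would note explicitly that when $B$ is a formula, $\ntrans{B}$ is again a formula (namely $\trans{B}^{\perp}$, reading off the translation table together with the de Morgan conventions fixed earlier), hence it makes sense to compare it with $\trans{A}$ and to apply $\ctr$ to it as a formula. With that observation in place, the argument is purely a matter of evaluating $\ctr$ on the two translations, which is exactly what the cited lemma packages up. I do not expect any real obstacle here—the work has already been done in the lemma; this corollary is just the ``separation'' payoff that distinguishes positive from negative translations, which is presumably what the completeness argument will need when it reasons by cases on whether a formula occurring in a $\CacLL$ proof is a translation of an $\acLL$ formula or the negation of one.

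If one wanted to be maximally self-contained and not invoke the lemma, the fallback would be a direct structural induction showing $\ctr$ is invariant in a way incompatible between the two translation clauses, but this merely re-proves the lemma, so I would not take that route. The cleanest write-up is simply: assume $\trans{A} \equiv \ntrans{B}$, apply $\ctr$, obtain $0 = 1$, contradiction.
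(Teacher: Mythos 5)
Your argument is exactly the paper's intended one: the corollary is drawn as an immediate consequence of the counter lemma, since a formula that were both $\trans{A}$ and $\ntrans{B}$ would have $\ctr$ equal to both $0$ and $1$. Correct, and no meaningful difference from the paper's approach.
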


\begin{lemma}
Let $\seq\Gamma$ be a provable sequent with $n$ formulae where every formula is of the form $\trans{C}$ or $\ntrans{C}$.  Then $\sum_{A\in\Gamma}\ctr(A)=n-1$.
\end{lemma}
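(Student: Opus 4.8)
The plan is to prove the statement by induction on the cut-free derivation of $\seq\Gamma$ (which exists by the cut elimination theorem), showing that every sequent appearing in such a derivation whose formulae are all of the form $\trans{C}$ or $\ntrans{C}$ satisfies $\sum_{A\in\Gamma}\ctr(A)=n-1$, where $n$ is the number of formula-leaves. The counter $\ctr$ was designed precisely so that it is invariant (in the appropriate bookkeeping sense) under the rules of $\CacLL$, so the work is a rule-by-rule verification.

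First I would check the axioms. For $(\init)$ we have $\seq(A,A^\perp)$; but here one of $A,A^\perp$ must be a literal in the translated language, and by the preceding lemma a translated atom $\trans{p}=p$ has counter $0$ while its negation $\bar p=\ntrans{p}$ has counter $1$, so the sum is $1=2-1$. (One must note that $(\init)$ on a non-literal compound does not arise, or handle it via the general literal case $\ctr(p)+\ctr(\bar p)=1$.) For $(\one)$ the sequent is $\seq\one$ with $\ctr(\one)=0=1-1$; the $(\top)$ axiom $\seq\Rx{\top}$ has one leaf and $\ctr(\top)=0$, which works only if the context is empty — and indeed in a translation of an $\acLL$ derivation $\top$-leaves arrive as $\trans{\top}=\top$ in single-formula position, so this is consistent, though I would double-check whether the statement is meant to range over all of $\CacLL$ or only the translated fragment (the hypothesis ``every formula is of the form $\trans C$ or $\ntrans C$'' together with the cut-freeness should confine us to exactly the shapes that occur). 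Then I would run through the inference rules. The structural rules $(\E),(\A1),(\A2)$ and the subexponential structural rules plainly preserve both $n$ and the multiset of formulae, hence the sum. For $(\parr)$: the premise $\seq\Rx{(F,G)}$ has $n+1$ leaves and the conclusion $\seq\Rx{F\parr G}$ has $n$; since $\ctr(F\parr G)=\ctr(F)+\ctr(G)-1$, the sum drops by exactly $1$ as $n$ drops by $1$, so $\text{sum}=(n+1-1)-1=n-1$. For $(\tensor)$: premises $\seq(\Gamma,G)$ and $\seq(\Delta,F)$ with $n_1$ and $n_2$ leaves, conclusion with $n_1+n_2-1$ leaves; sums are $n_1-1$ and $n_2-1$, and since $\ctr(F\tensor G)=\ctr(F)+\ctr(G)$ the conclusion sum is $(n_1-1)+(n_2-1)=n_1+n_2-2=(n_1+n_2-1)-1$. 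For $(\oplus_i)$ and $(\with)$: $n$ is unchanged, $\ctr(F_1\oplus F_2)=\ctr(F_1)$ and $\ctr(F_1\with F_2)=\ctr(F_1)=\ctr(F_2)$ — here I would use the companion fact that in the translated fragment $\ctr(\trans{F_1})=\ctr(\trans{F_2})$ (both are $0$, resp. both $\ntrans{}$-forms are $1$), so the two $(\with)$-premises agree and the value is preserved. For $(\perp)$: premise $\seq\Rx{}$ has $n-1$ leaves, conclusion $\seq\Rx{\perp}$ has $n$, and $\ctr(\perp)=1$, so the sum goes from $(n-1)-1$ to $(n-1)-1+1=n-1$. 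For $(\prom)$ and $(\der)$: $\ctr(\nbang{i}F)=\ctr(\nquest{i}F)=\ctr(F)$ and $n$ is unchanged (promotion may delete some $\nquest{k}$-leaves with $\W\in f(k)$, but those are again in the translated fragment as $\ntrans{}$-forms... this needs care — see below); derelicition keeps $n$ fixed and preserves the formula's counter. The $(\nynot{}\W)$ and $(\nynot{}\C)$ rules are the delicate ones for the leaf-count/sum bookkeeping.

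The main obstacle is the weakening rule $(\nynot{}\W)$, and relatedly the implicit weakenings hidden inside $(\prom)$ via the $\Gamma^{\upset i}$ operation. Adding a leaf $\nynot{w}\Delta$ changes $n$ but also changes the sum by $\ctr(\nynot{w}\Delta)$, and the equation $\text{sum}=n-1$ is only preserved if $\ctr$ of the weakened-in structure equals its number of leaves — i.e. every leaf weakened in must have counter exactly $1$. This is in fact true in the translated setting: a weakenable leaf is $\nynot{w}\trans{C}$ or $\nynot{w}\ntrans{C}$, and $\ctr(\nynot{w}C)=\ctr(C)$, which is $0$ for $\trans C$ and $1$ for $\ntrans C$ — so this does \emph{not} automatically hold, and I must argue that in a derivation coming from the translated fragment the weakened formulae are always of the ``heavy'' ($\ntrans{}$, counter $1$) kind, or else reexamine the statement's scope. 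I would handle this by strengthening the induction hypothesis: track not just the sum but enough structural information (e.g. that the whole derivation lives inside the image of the translation, where antecedent-type formulae carry counter $1$ and the unique succedent-type formula carries counter $0$) to pin down which leaves can be weakened. For $(\nynot{}\C)$, contracting a copy of $\nquest{c}\Delta$ decreases $n$ by the number of leaves in $\Delta$ and decreases the sum by $\ctr(\nquest{c}\Delta)$; consistency again requires $\ctr(\nquest{c}\Delta)$ to equal the leaf-count of $\Delta$, the same condition. So the heart of the argument is a single invariant — ``translated-antecedent leaves have counter $1$, the translated-succedent leaf has counter $0$, there is exactly one of the latter'' — which, once built into the induction hypothesis, makes all the structural-rule cases, the promotion case, and the weakening/contraction cases go through uniformly; the propositional-rule cases are then the routine arithmetic sketched above.
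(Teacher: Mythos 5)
Your overall strategy---induct on a cut-free derivation (licensed by the cut-elimination theorem), keep track of the leaf count and the counter sum, and verify rule by rule using $\ctr(\trans{C})=0$ and $\ctr(\ntrans{C})=1$---is exactly the ``straightforward induction'' the paper delegates to its technical report, and your arithmetic for $(\tensor)$, $(\parr)$, $(\perp)$, $(\with)$, $(\oplus_i)$ and the structural rules is correct. But at the point you yourself call the heart of the argument there is a genuine gap. For $(\nynot{}\W)$, for the weakenings hidden in $(\prom)$ via $\Gamma^{\upset{i}}$, and for $(\nynot{}\C)$, you need every weakened-in or contracted leaf to have counter exactly $1$; you rightly note this ``does not automatically hold,'' but your proposed repair---building ``antecedent-type leaves have counter $1$ and there is exactly one succedent-type formula'' into the induction hypothesis---does not close it. Knowing the premise has a unique positive formula says nothing about which formulae are (re)introduced below it, and ``exactly one positive'' is precisely the Polarization statement that is supposed to be a \emph{consequence} of this lemma, not an ingredient. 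The missing step is a purely syntactic observation already available from the translation clauses together with the Corollary that no formula is both of the form $\trans{A}$ and $\ntrans{B}$: a positive translation never has outermost $\nquest{}$ (the only $\quest$ the translation produces arises from dualizing $\nbang{i}$), so any leaf of the form $\nquest{k}B$ in a sequent all of whose formulae lie in the translated fragment must be $\ntrans{\nbang{k}C}$ and hence has counter $1$. With that single remark the $(\nynot{}\W)$, $(\nynot{}\C)$ and $(\prom)$ cases are immediate, and no strengthened invariant is needed; what \emph{is} needed, and should be said explicitly, is that a cut-free derivation of a fragment sequent stays inside the fragment when read upward (this is also why appealing to cut elimination is essential: the premises of $\cut$ need not be fragment sequents, and the counter is not preserved by $(\oplus_2)$ or $(\with)$ outside the fragment).

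Your treatment of the $(\top)$ axiom is also not right as written. The claim that it ``works only if the context is empty'' is false: the axiom fires with an arbitrary context, e.g.\ in the translation of $\Gamma\seq\top$ with nonempty $\Gamma$, and there the count does work out (taking $\ctr(\top)=0$) precisely because all remaining leaves are negative translations, each of counter $1$. What cannot be waved away is a sequent such as $\seq(p,\top)$: it is provable, both formulae are positive translations, and no fixed value of $\ctr(\top)$ reconciles it with $\seq(\bar{p},\top)$, so the additive unit genuinely requires separate care (note that the paper's counter table does not even define $\ctr(\top)$, and the delicacy of additive constants is exactly the theme of the incompleteness subsection). Your instinct to ``double-check the scope of the statement'' was the right one and should have been carried through---either by restricting the lemma away from the additive constants or by arguing separately how $(\top)$ instances can occur in the derivations actually used---rather than resolved with an incorrect claim about the context being empty.
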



\begin{definition}
We say that a sequent with exactly one formula of the form $\trans{C}$ and the rest of the form $\ntrans{C}$ is \emph{intuitionistically polarizable}.  We call the formula of the form $\trans{C}$ the \em{positive formula}.
\end{definition}

\begin{lemma}[Intuitionistic Polarization]
If a sequent with all formulae are of the form $\trans{C}$ or $\ntrans{C}$ is provable, then exactly one of the formulae is of the form $\trans{C}$, \ie\ it is intuitionistically polarizable.
\end{lemma}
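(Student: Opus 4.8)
The plan is to prove this by a pure counting argument, leaning entirely on the counter $\ctr$ and the two auxiliary results just established. First I would fix a provable sequent $\seq\Gamma$ all of whose $n$ leaf formulae are of the form $\trans{C}$ or $\ntrans{C}$. By the Corollary stating that a formula cannot simultaneously be of the form $\trans{A}$ and of the form $\ntrans{B}$, the $n$ leaves split unambiguously into those of the form $\trans{C}$ --- say there are $k$ of these --- and those of the form $\ntrans{C}$ --- the remaining $n-k$. Thus $k$ is a well-defined nonnegative integer, and the goal reduces to showing $k=1$.

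Next I would evaluate $\sum_{A\in\Gamma}\ctr(A)$ in two ways. On one hand, by the preceding lemma on provable sequents built from $\trans{\cdot}$- and $\ntrans{\cdot}$-formulae, this sum equals $n-1$. On the other hand, since $\ctr(\trans{C})=0$ and $\ctr(\ntrans{C})=1$, the same sum equals $k\cdot 0+(n-k)\cdot 1=n-k$. Equating the two gives $n-k=n-1$, hence $k=1$. In particular $k\neq 0$, so a positive formula exists, and $k\not\geq 2$, so it is unique; therefore the sequent is intuitionistically polarizable, as claimed.

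The substantive work here is not in this step but in the summation lemma, which is assumed: its proof is a routine induction over $\CacLL$ derivations checking that every rule preserves the invariant $\sum_{A\in\Gamma}\ctr(A)=(\text{number of leaves})-1$, with base cases $(\init)$, $(\one)$, $(\top)$, the multiplicative rules balancing exactly because of the $-1$ in $\ctr(A\parr B)$ against the structure-splitting of $(\tensor)$, and the structural/subexponential rules --- $(\E)$, $(\A1)$, $(\A2)$, $(\nynot{}\W)$, $(\nynot{}\C)$, $(\prom)$, $(\der)$, and the $\sim$-steps --- leaving both sides of the invariant unchanged or moving them in lockstep, using that $\ctr$ is invariant under $\nbang{i}$, $\nquest{i}$ and under structural equivalence. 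Given that lemma, I do not expect any obstacle in the present statement beyond one point worth making explicit: the disjointness Corollary is genuinely needed, since without it the count $k$ of positive formulae would not even be well-defined; it is precisely what turns the $0/1$ dichotomy of leaf counters into an honest partition.
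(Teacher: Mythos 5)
Your proposal is correct and matches the paper's own argument: both count the formulae via $\ctr$, using $\ctr(\trans{C})=0$, $\ctr(\ntrans{C})=1$, and the invariant $\sum_{A\in\Gamma}\ctr(A)=n-1$ to conclude that exactly $n-1$ formulae are of the form $\ntrans{C}$ and hence exactly one is positive. Your extra remark about the disjointness corollary making the count well-defined is a sensible explicit note, but it is the same route as the paper.
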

\begin{proof}
Let $n$ be the number of formulae in $\Gamma$.  Since by previous lemmas $\ctr(\trans{C})=0$, $\ctr(\ntrans{C})=1$, and $\sum_{A\in\Gamma}\ctr(A)=n-1$, there are exactly $n-1$ formulae of the form $\ntrans{C}$.
\end{proof}

\begin{remark}
Note that any intuitionistically polarizable sequent is structurally equivalent to a unique sequent of the form $(\ntrans{\Gamma},\trans{C})$.
\end{remark}

We now sketch the proof of completeness. The full proof can be found in~\cite{blaisdell2023explorations}.
\begin{lemma}[Completeness]
Let $\Sigma$ be a subexponential signature where all labels $i$ have $f(i)\subseteq \{\C,\W,\E\}$ and let $\Gamma\seq A$ be an $\acLL$ sequent.  If $\seq (\ntrans{\Gamma},\trans{A})$ is provable in $\CacLL$, then $\Gamma\seq A$ is provable in $\acLL$.
\end{lemma}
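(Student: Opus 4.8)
The plan is to use the cut-elimination theorem to reduce to a cut-free $\CacLL$ proof of $\seq(\ntrans{\Gamma},\trans{A})$ and then induct on its structure, reading it bottom-up and manufacturing an $\acLL$ derivation step by step. The first move is to establish an invariant: in a cut-free proof whose endsequent has every leaf of the form $\trans{C}$ or $\ntrans{C}$, every sequent occurring in the proof has this property. This is checked rule by rule: the principal formula of any rule is a translation, and a case inspection of the translation clauses together with the de~Morgan duals — $\ntrans{A\tensor B}\equiv\ntrans{B}\parr\ntrans{A}$, $\ntrans{A\to B}\equiv\ntrans{B}\tensor\trans{A}$, $\ntrans{B\la A}\equiv\trans{A}\tensor\ntrans{B}$, $\ntrans{A\with B}\equiv\ntrans{A}\oplus\ntrans{B}$, $\ntrans{\nbang{i}A}\equiv\nynot{i}\ntrans{A}$, $\ntrans{\one}\equiv\perp$, and so on — shows that the active formulae appearing in the premises are again translations; the only available subexponential structural rules are $\nynot{}\W,\nynot{}\C,\nynot{}\E$ (the hypothesis $f(i)\subseteq\{\C,\W,\E\}$ kills $\nynot{}\A1,\nynot{}\A2$), and these merely erase, duplicate, or permute $\nynot{i}$-leaves. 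Hence by the Intuitionistic Polarization Lemma every sequent in the proof is polarizable, and by the Remark it is structurally equivalent to a unique sequent $(\ntrans{\Gamma'},\trans{C'})$; since $\ntrans{\cdot}$ is injective on structures built from negative translations (the counter $\ctr$ separates $\trans{\cdot}$ from $\ntrans{\cdot}$), the data $\Gamma',C'$ are well-defined, and we associate to the sequent the $\acLL$ sequent $\Gamma'\seq C'$.

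Working modulo the designator, the structural rules $\E,\A1,\A2$ of $\CacLL$ disappear: if $\Theta\sim\Theta'$ then, by uniqueness of the designator, putting the common positive formula in designated position yields literally the same sequent, so these rules need no $\acLL$ counterpart. It then remains to walk through the logical and subexponential rules, in each case splitting on whether the principal formula is the positive formula $\trans{C'}$ or a negative leaf $\ntrans{D}$; the counter $\ctr$ rules out the impossible polarity in every case (e.g. a $\parr$-formula with $\ctr=0$ must be some $\trans{A\to B}$ or $\trans{B\la A}$, while one with $\ctr=1$ must be some $\ntrans{A\tensor B}$). Each case then matches one $\acLL$ rule: $\tensor$ with positive (resp. negative) principal formula to $\tensor R$ (resp. $\to L$ or $\la L$); $\parr$ with positive (resp. negative) principal formula to $\to R$ or $\la R$ (resp. $\tensor L$); $\oplus_i$ and $\with$ to the corresponding $\acLL$ additive rule, on the right or on the left according to polarity; $\one,\top$ to the $\acLL$ axioms for $\one$ and $\top$; $\perp$, acting on $\ntrans{\one}$, to $\one L$; $\der$, acting on $\ntrans{\nbang{i}D}=\nynot{i}\ntrans{D}$, to $\acLL$ dereliction; $\prom$ to $\acLL$ promotion, possibly prefixed by $\acLL$ weakenings; and $\nynot{}\W,\nynot{}\C,\nynot{}\E$ to the $\acLL$ weakening, contraction, and commutation rules. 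When the principal formula sits deep inside $\ntrans{\Gamma'}$, the Independent Substructure Preservation Lemma guarantees that the associated $\acLL$ rule applies at the correct position and that passing to designated form commutes with the structural surgery the rule performs; in the $\parr$-with-positive-principal case, the fact that the positive formula is designated onto the top-level right branch is exactly what makes $\acLL$'s implication-right rule applicable, since the hypothesis then lands adjoined to the root of $\Gamma'$ (on the left for $\to$, on the right for $\la$), as $\backslash R$ and $/R$ demand.

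I expect the promotion case to be the main obstacle. There one must (i) verify that $\ntrans{\Gamma'}$ being amenable to $(\cdot)^{\upset{i}}$ corresponds exactly to the context side-condition of $\acLL$'s $\nbang{}$-promotion — each retained leaf $\nynot{j}\ntrans{D}$ with $i\preceq j$ becoming a premise leaf $\nbang{j}D$ of the $\acLL$ promotion — and (ii) account for the weakenings implicit in a $\prom$ step: the leaves $\nynot{k}\ntrans{D}$ with $i\not\preceq k$ and $\W\in f(k)$ that $(\cdot)^{\upset{i}}$ erases must be reintroduced on the $\acLL$ side by weakening steps, which is legitimate precisely because $\W\in f(k)$. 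A secondary, pervasive difficulty is keeping the order reversal built into $\ntrans{(\Gamma,\Delta)}\equiv(\ntrans{\Delta},\ntrans{\Gamma})$ synchronized with the left/right asymmetry of $\to$ and $\la$ and of the $\acLL$ structural rules; the de~Morgan identities above are chosen so that this matching is exact, but verifying it uniformly across all cases — together with the edge cases where $\Gamma'$ is empty, handled via $\acLL$'s treatment of $\one$ — is where the bulk of the routine work lies.
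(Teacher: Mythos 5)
Your proposal is correct and follows essentially the same route as the paper's own argument: induction on (cut-free) $\CacLL$ proofs, using the counter-based Intuitionistic Polarization Lemma to fix the unique positive formula, working modulo the designator so the top-level structural rules need no counterpart, matching each logical/subexponential rule to an $\acLL$ rule by the polarity of the principal formula (with the non-polarizable premises ruling out the impossible $\tensor$/$\parr$ splittings), and invoking the Independent Substructure Preservation Lemma for rules acting deep in the context and for the $\nynot{}\W,\nynot{}\C,\nynot{}\E$ rules. Your explicit treatment of the invariant that all leaves remain translations, and of the weakenings implicit in $\prom$ via $(\cdot)^{\upset{i}}$, is exactly the bookkeeping the paper's full proof carries out.
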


\begin{proof}
We prove the theorem by induction on the length of $\CacLL$ proofs.  
If the first nonstructural rule is $(\tensor)$, we must consider the following subcases.
\[
\infer=[\sim]{\seq ((\ntrans{\Delta},\ntrans{\Gamma}),\trans{A}\otimes\trans{B})}{
\infer[\otimes]{\seq ((\ntrans{\Delta},\ntrans{\Gamma}),\trans{A}\otimes\trans{B})}{
\seq (\ntrans{\Gamma},\trans{A}) &
\seq (\ntrans{\Delta},\trans{B})
}}
\quad
\rwto
\quad
\infer[\otimes R]{\Gamma,\Delta\seq A\tensor B}{
\Gamma\seq A &
\Delta\seq B
}
\]
\[
\infer=[\sim]{\seq(\Rxn{\ntrans{\Delta},\trans{A}\otimes\ntrans{B}},\trans{C})}{
\infer[\otimes]{\seq((\desi{(\Rxn{*},\trans{C})},\ntrans{\Delta}),\trans{A}\otimes\ntrans{B})}{
\seq(\ntrans{\Delta},\trans{A}) &
\deduce{\seq(\desi{(\Rxn{*},\trans{C})},\ntrans{B})}{
    \seq (\Rxn{\ntrans{B}},\trans{C}) \sim
}
}}
\quad
\rwto
\quad
\infer[\la L]{\Rx{B\la A,\Delta}\seq C}{
    \Delta\seq A &
    \Rx{B}\seq C
}
\]
\[
\infer=[\sim]{\seq(\Rxn{\ntrans{B}\otimes\trans{A},\ntrans{\Delta}},\trans{C})}{
\infer[\otimes]{\seq((\ntrans{\Delta},\desi{(\Rxn{*},\trans{C})}),\ntrans{B}\otimes\trans{A})}{
\deduce{\seq(\desi{(\Rxn{*},\trans{C})},\ntrans{B})}{
    \seq (\Rxn{\ntrans{B}},\trans{C}) \sim
} &
\seq(\ntrans{\Delta},\trans{A})
}}
\quad
\rwto
\quad
\infer[\la L]{\Rx{\Delta,A\to B}\seq C}{
    \Delta\seq A &
    \Rx{B}\seq C
}
\]
There are two remaining ways that $\tensor$ can appear in the translation of a sequent and be principal.
\[
\infer=[\sim]{\seq(\Rxn{\ntrans{\Delta},\ntrans{B}\otimes\trans{A}},\trans{C})}{
\infer[\otimes]{\seq((\desi{(\Rxn{*},\trans{C})},\ntrans{\Delta}),\ntrans{B}\otimes\trans{A})}{
    \seq(\ntrans{\Delta},\ntrans{B}) &
    \seq(\desi{(\Rxn{*},\trans{C})},\trans{A})
}}
\qquad
\infer=[\sim]{\seq(\Rxn{\trans{A}\otimes\ntrans{B},\ntrans{\Delta}},\trans{C})}{
\infer[\otimes]{\seq((\ntrans{\Delta},\desi{(\Rxn{*},\trans{C})}),\trans{A}\otimes\ntrans{B})}{
    \seq(\desi{(\Rxn{*},\trans{C})},\trans{A}) &
    \seq(\ntrans{\Delta},\ntrans{B})
}}
\]
However, the premises are not intuitionistically polarizable, and therefore cannot be provable; in other words, these cases are impossible.

Most interestingly we have subexponentially licensed structural rules.
The positive formula in the sequent cannot be of the form $\nynot{i}A$, and is thus not part of the active substructure of any subexponential structural rules.  Hence, by the independent substructure lemma, we can make the following transformations, where $\desi{\Rx{*}}^r$ indicates reversing $\Rx{*}$, designating, and reversing back.
\[
\infer=[\sim]{\seq(\desi{\Rxn{\nynot{c}\ntrans{\Delta}}\Ex{}\Ex{*}},\trans{C})}{
\infer[\C]{\seq(\Rxn{\nynot{c}\ntrans{\Delta}}\Ex{}\Ex{\trans{C}})}{
\deduce{\seq(\Rxn{\nynot{c}\ntrans{\Delta}}\Ex{\nynot{c}\ntrans{\Delta}}\Ex{\trans{C}})}{
\seq(\desi{\Rxn{\nynot{c}\ntrans{\Delta}}\Ex{\nynot{c}\ntrans{\Delta}}\Ex{*}},\trans{C})\sim
}}}
\quad
\rwto
\quad
\infer[\C]{\desi{\Rx{\nbang{c}\Delta}\Ex{*}}^r\seq C}{
\desi{\Rx{\nbang{c}\Delta}\Ex{\nbang{c}\Delta}\Ex{*}}^r\seq C
}
\]
\end{proof}

We finish this section with two observations regarding some of our choices on rules and notation. First, it should be clear now the necessity of the top level structural rules in the system $\CNL$.  Since we expect completeness over the intuitionistic system, translations of provable sequents should themselves be provable. For example, $A\to B\seq A\to B$ translates corresponds to the one sided sequent $\seq (B^{\perp}\tensor A,A^{\perp}\parr B)$, whose proof requires top level exchange.
\[
\infer[\parr]{\seq (B^{\perp}\tensor A,A^{\perp}\parr B)}{
\infer[\E]{\seq (B^{\perp}\tensor A,(A^{\perp},B))}{
\infer[\tensor]{\seq ((A^{\perp},B),B^{\perp}\tensor A)}{
    \infer[\init]{\seq (A^{\perp},A)}{} &
    \infer[\init]{\seq (B,B^{\perp})}{}
}}}
\]
In non-associative systems, currying requires application of associativity.  Loosely, the deduction theorem is a top level currying, and this is captured by the admission of top level associativity in the classical system.  We see this behavior in the proof of the translation of $B\seq (A\to A\tensor B)$, i.e. $\seq (B^{\perp},A^{\perp}\parr(A\tensor B))$.  This cannot be proven without top level associativity.
\[
\infer[\parr]{\seq (B^{\perp},A^{\perp}\parr(A\tensor B))}{
\infer[\A2]{\seq (B^{\perp},(A^{\perp},A\tensor B))}{
\infer[\tensor]{\seq ((B^{\perp},A^{\perp}),A\tensor B)}{
    \infer[\init]{\seq(B^{\perp},B)}{} &
    \infer[\init]{\seq(A^{\perp},A)}{}
}}}
\]

Finally, we would like to note that in our classical system we follow the right-handed presentation $\seq \Gamma$, which is traditional in logic. It is also possible to consider the dual, left-handed presentation    $\Gamma^\perp \Rightarrow$, as in Buszkowski~\cite{DBLP:conf/lacl/Buszkowski16}, which is perhaps closer to the notation in type-logical, formal linguistics. An intuitionistic sequent $\Gamma \seq A$ would be translated into the left-handed classical system as $(A^\perp , \Gamma) \seq$. This translation is also conservative.

\subsection{Incompleteness with Associativity}
Our completeness excludes subexponentials licensing associativity.  To see why, consider the formula
\[
((a\tensor b)\tensor \nbang{a}c)\to(a\tensor (b\tensor \nbang{a}c))
\]
which encodes the converse to the rule $(\A2)$ of $\acLL$.  An exhaustive search finds that there is no cut-free proof of this in $\acLL$, but its translation has the following proof in $\CacLL$.

\[
\infer=[\parr]{\seq(\nynot{a}\bar{c}\parr(\bar{b}\parr\bar{a}))\parr(a\tensor (b\tensor \nbang{a}c)}{
\infer=[\A1,\E]{\seq((\nynot{a}\bar{c},(\bar{b},\bar{a})),a\tensor (b\tensor \nbang{a}c))}{
\infer[\nynot{}\A1]{\seq((\bar{b},\bar{a}),a\tensor (b\tensor \nbang{a}c),\nynot{a}\bar{c})}{
\infer=[\E,\A2,\A2]{\seq((\bar{b},(\bar{a},a\tensor (b\tensor \nbang{a}c))),\nynot{a}\bar{c})}{
\infer[\tensor]{\seq(((\nynot{a}\bar{c},\bar{b}),\bar{a}),a\tensor (b\tensor \nbang{a}c))}{
    \infer[\init]{\seq(\bar{a},a)}{} &
    \infer[\tensor]{\seq((\nynot{a}\bar{c},\bar{b}),b\tensor \nbang{a}c)}{
        \infer[\init]{\seq(\bar{b},b)}{} &
        \infer[\init]{\seq(\nynot{a}\bar{c},\nbang{a}c)}{}
    }
}}}}}
\]

\subsection{Extending $\acLL$}
We can recapture associativity by adding in more rules to the `intuitionistic' system $\acLL$.  While $\acLL$ has two subexponential labels for associativity and two subexponential rules for associativity, we consider an expanded system, still with two labels for associativity, but with six rules.

\begin{definition}
Let $\acLL^+$ be the logic containing all the rules of $\acLL$ except $(\A1)$ and $(\A2)$ with the addition of the following six subexponential associativity rules.

\[
\infer[\mathsf{A1L}]{\Rx{(\nbang{a1}\Delta_1,(\Delta_2,\Delta_3))}\seq G}{\Rx{((\nbang{a1}\Delta_1,\Delta_2),\Delta_3)}\seq G}
\qquad
\infer[\mathsf{A1M}]{\Rx{((\Delta_1,\nbang{a1}\Delta_2),\Delta_3)}\seq G}{\Rx{(\Delta_1,(\nbang{a1}\Delta_2,\Delta_3))}\seq G}
\qquad
\infer[\mathsf{A1R}]{\Rx{(\Delta_1,(\Delta_2,\nbang{a1}\Delta_3))}\seq G}{\Rx{((\Delta_1,\Delta_2),\nbang{a1}\Delta_3)}\seq G}
\]
\[
\infer[\mathsf{A2L}]{\Rx{((\nbang{a2}\Delta_1,\Delta_2),\Delta_3)}\seq G}{\Rx{(\nbang{a2}\Delta_1,(\Delta_2,\Delta_3))}\seq G}
\qquad
\infer[\mathsf{A2M}]{\Rx{(\Delta_1,(\nbang{a2}\Delta_2,\Delta_3))}\seq G}{\Rx{((\Delta_1,\nbang{a2}\Delta_2),\Delta_3)}\seq G}
\qquad
\infer[\mathsf{A2R}]{\Rx{((\Delta_1,\Delta_2),\nbang{a2}\Delta_3)}\seq G}{\Rx{(\Delta_1,(\Delta_2,\nbang{a2}\Delta_3))}\seq G}
\]
\end{definition}

Note that the rules $(\AOL)$ and $(\ATR)$ of $\acLL^+$ are exactly the rules $(\A1)$ and $(\A2)$ of $\acLL$, respectively, so $\acLL^+$ is a stronger system.

The proof of the next theorem is exactly as in the previous completeness theorem, with the addition of a case for the rules $(\A1)$ and $(\A2)$, and it is shown in~\cite{blaisdell2023explorations}.
\begin{theorem}[Completeness with Associativity]
Let $\Gamma\seq A$ be an $\acLL^+$ sequent (whose signature may include $\A1$ and $\A2$).  If $\seq(\ntrans{\Gamma},\trans{A})$ is provable in $\CacLL$, then $\Gamma\seq A$ is provable in $\acLL^+$.
\end{theorem}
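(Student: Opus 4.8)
The plan is to run an induction on the length of $\CacLL$ proofs, exactly as in the proof of the Completeness Lemma (the one excluding associativity), and to check that the two additional structural rules $(\A1)$ and $(\A2)$ of $\CacLL$ now have matching cases. So I would re-use verbatim the entire case analysis already carried out: the base cases $(\init),(\one),(\top)$; the propositional cases $(\tensor),(\parr),(\oplus_i),(\with),(\perp)$, including the observation that premises failing intuitionistic polarizability are unprovable and hence those subcases are vacuous; the subexponential cases $(\prom),(\der)$; and the subexponentially licensed structural cases $(\nynot{}\E),(\nynot{}\C),(\nynot{}\W)$, the last of which reconstructs, via the Independent Substructure Preservation lemma, an $\acLL^+$ derivation from the intuitionistically polarizable premise. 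None of these cases is affected by replacing $\acLL$ with $\acLL^+$: $\acLL^+$ contains all the rules invoked there except $(\A1)$ and $(\A2)$ themselves, and as noted in the excerpt $(\AOL)$ and $(\ATR)$ of $\acLL^+$ \emph{are} exactly those two rules, so nothing is lost.

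The genuinely new work is the case where the bottom rule of the $\CacLL$ proof is a top-level structural rule $(\A1)$ or $(\A2)$. Here the positive formula of the (intuitionistically polarizable) conclusion has $\ctr = 0$, so it is not of the form $\nynot{i}A$, and by Intuitionistic Polarization the premise is again intuitionistically polarizable; we then invoke the Remark to rewrite it uniquely in the form $\seq(\ntrans{\Gamma'},\trans{A})$ and apply the induction hypothesis to obtain $\Gamma'\seq A$ in $\acLL^+$. It then remains to show that the shape change induced on the antecedent by the classical $(\A1)$ or $(\A2)$ step — modulo the negative translation, which reverses the order of pairs — is realised in $\acLL^+$ by one of its six associativity rules. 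Since the positive formula sits on a fixed designated branch, the associativity reassociation happens entirely inside some context $\Rx{\cdot}$ of the antecedent, and depending on whether the $\nbang{a1}$- or $\nbang{a2}$-marked substructure occupies the left, middle, or right position of the reassociated triple, exactly one of $(\AOL),(\AOM),(\AOR)$ or $(\ATL),(\ATM),(\ATR)$ applies. One has to be slightly careful because the tight negation reverses pair order, so e.g. a classical reassociation $\seq((\Delta_1,\Delta_2),\Delta_3)\rightsquigarrow\seq(\Delta_1,(\Delta_2,\Delta_3))$ on negated translations corresponds, after un-reversing, to the mirror-image reassociation on the $\acLL^+$ side; tracking which of the six rules matches each mirrored configuration is the one place where a careful table is needed.

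The main obstacle I anticipate is precisely this bookkeeping: matching each of the (two rules) $\times$ (three positions of the marked substructure) $\times$ (order-reversal of $\ntrans{\cdot}$) combinations to the correct one of the six $\acLL^+$ rules, and verifying in each that the rule's side condition (that the reassociated component carries the appropriate subexponential label $a1$ or $a2$) is met — this is guaranteed because in $\CacLL$ the rules $(\nynot{}\A1),(\nynot{}\A2)$, and the top-level $(\A1),(\A2)$ only fire when the relevant component is $\nynot{a1}\Gamma$ or $\nynot{a2}\Gamma$, and the translation preserves the marking. Everything else is a routine re-run of the earlier completeness argument, so I would present the proof as ``as in the proof of the Completeness Lemma, with the following additional case,'' and then give the six-way case split for $(\A1)/(\A2)$, deferring the full details to~\cite{blaisdell2023explorations}.
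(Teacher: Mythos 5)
Your overall strategy---rerun the induction from the Completeness Lemma and add a case analysis matching the new associativity steps against the six rules of $\acLL^+$---is also the paper's strategy (the paper states exactly this and defers details to the technical report). The gap is that you have misidentified \emph{which} $\CacLL$ rules constitute the new case, and the misidentification rests on a false belief about the system. The top-level rules $(\A1)$, $(\A2)$ (and $(\E)$) belong to the base calculus $\CNL$ of Figure~\ref{fig:FNL}: they are unconditional, require no $\nynot{a1}$- or $\nynot{a2}$-marked component, and are available for every signature. Your claim that ``the top-level $(\A1),(\A2)$ only fire when the relevant component is $\nynot{a1}\Gamma$ or $\nynot{a2}\Gamma$'' is therefore wrong. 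These rules were already present in the associativity-free setting and were handled there not by simulating them in the intuitionistic system but by working modulo the structural equivalence $\sim$: they generate exactly $\sim$, and by the uniqueness Remark an intuitionistically polarizable sequent determines, modulo $\sim$, a unique sequent of the form $(\ntrans{\Gamma},\trans{A})$, hence a unique intuitionistic sequent; this is precisely why the induction inspects ``the first nonstructural rule.'' Trying to mirror an unconditional top-level reassociation by one of the six $\acLL^+$ rules, as you propose, would fail outright whenever no $\nbang{a1}$- or $\nbang{a2}$-marked substructure is present (and is unnecessary, since such a step is invisible on the intuitionistic reading).

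The genuinely new induction cases are instead the subexponentially licensed rules $(\nynot{}\A1)$ and $(\nynot{}\A2)$ of Figure~\ref{fig:CacLL}, which become applicable only because the signature may now contain labels with $\A1,\A2\in f(\cdot)$; these are the cases the paper adds, and they are what must be matched---after undoing the designator/cyclic moves and the order reversal of $\ntrans{\cdot}$---against $\mathsf{A1L},\mathsf{A1M},\mathsf{A1R},\mathsf{A2L},\mathsf{A2M},\mathsf{A2R}$. Your first paragraph lists only $(\nynot{}\E),(\nynot{}\C),(\nynot{}\W)$ among the reused structural cases and never treats $(\nynot{}\A1),(\nynot{}\A2)$ as induction cases in their own right, so under your plan they are either skipped or conflated with the top-level rules. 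Note also that in $(\nynot{}\A1),(\nynot{}\A2)$ the licensing block $\nynot{a1}\Gamma$ is \emph{not} one of the three components being reassociated but sits beside the triple at the top level; the three-way split on the intuitionistic side comes from where that block lands relative to the reassociated triple once the sequent is rewritten in its unique polarizable form, not from ``which position of the triple carries the marking'' as you describe. (A smaller slip: the positive formula cannot be of the form $\nynot{i}A$ because translations $\trans{C}$ never produce a top-level question mark, not because its counter is $0$---indeed $\ctr(\nynot{i}A)=\ctr(A)$, so the counter alone excludes nothing here.) With the case analysis relocated to $(\nynot{}\A1),(\nynot{}\A2)$ and the bookkeeping redone for the actual shape of those rules, your plan would coincide with the paper's proof.
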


\subsection{Incompleteness with Additive Constants}
If we extend $\acLL$ with the $\zero$ constant, governed by the following rule, we lose completeness.

\[
\infer[\zero L]{\Rx{\zero}\seq C}{}
\]

Adapting a counterexample of \cite{DBLP:journals/mscs/KanovichKNS19}, who themselves adapt a counterexample of \cite{DBLP:journals/logcom/Schellinx91}, we consider the following sequent.

\[
\nbang{a}((r\la(0\to q))\la p),(s\la p)\to 0\seq r
\]

By exhaustive proof search, we find that this is not provable in the extended intuitionistic system.  However, the translation, with $\trans{0}:\equiv 0$, has the following proof in $\CacLL$.

\[
\infer[~]{\seq ((\top\tensor(s\parr p^{\perp}),\nynot{a}(p\tensor((\top\parr q)\tensor r^{\perp}))),r)}{
\infer[\tensor]{\seq((\nynot{a}(p\tensor((\top\parr q)\tensor r^{\perp})),r),\top\tensor(s\parr p^{\perp}))}{
    \infer[\top]{\seq\top}{} &
    \infer[\parr]{\seq((\nynot{a}(p\tensor((\top\parr q)\tensor r^{\perp})),r),s\parr p^{\perp})}{
    \infer[\sim]{\seq((\nynot{a}(p\tensor((\top\parr q)\tensor r^{\perp})),r),(s,p^{\perp}))}{
    \infer[\A1]{\seq((r,(s,p^{\perp})),\nynot{a}(p\tensor((\top\parr q)\tensor r^{\perp})))}{
    \infer[\der]{\seq(((r,s),p^{\perp}),\nynot{a}(p\tensor((\top\parr q)\tensor r^{\perp})))}{
    \infer[\tensor]{\seq(((r,s),p^{\perp}),p\tensor((\top\parr q)\tensor r^{\perp}))}{
        \infer[\tensor]{\seq((r,s),(\top\parr q)\tensor r^{\perp})}{
            \infer[\init]{\seq (r,r^{\perp})}{} &
            \infer[\parr]{\seq (s,\top\parr q)}{
                \infer[\top]{\seq (s,(\top,q))}{}
            }
        } &
        \infer[\init]{(p^{\perp},p)}{}
    }}}}}
}
}
\]

\section{Conclusion and future work}\label{sec:conc}
Regarding proof theoretic aspects of $\CacLL$, the circularity of both the structural rules over structures and subexponentials may cause meaningless steps in derivations.
The focusing  discipline~\cite{Andreoli:1992} is
determined by the alternation of {\em focused} and {\em unfocused} phases in the proof construction.
In the unfocused phase, inference rules can be applied eagerly and no
backtracking is necessary;
in the  focused phase, on the other hand, either context restrictions apply, or choices
within inference rules can lead to failures for which one may need to
backtrack. These phases are totally determined by the polarities of formulae:
provability is preserved when applying right/left rules for negative/positive formulae respectively, but not necessarily in other cases. 

In the near future, we  plan to propose a focused system for $\CacLL$. We start by observing that, since derivations will be considered {\em modulo} designator, the (circular) structural rules in Figure~\ref{fig:FNL} can be dropped. The only other point  of circularity comes from the subexponential structural rules. For the remaining linear logic connectives,  polarization and focusing is well understood. 
Weakening can be absorbed into the rules and it seems possible to restrict contraction to ``neutral'' structures such as proposed in~\cite{DBLP:conf/fossacs/GheorghiuM21}, meaning that it can be applied only before focusing on a formula. The only point of atention would be associativity, and this is under investigation at the moment. 

Also, our subexponential extension of the Lambek non-associative calculus $\NL$~\cite{Lambek1961OnTC} makes it possible to use local associativity, controlled by appropriate subexponentials, instead of the global associativity. The usefulness of this more fine-grained control of associativity is best seen in the linguistic examples considered in~\cite{DBLP:conf/cade/BlaisdellKKPS22} that involve both non-associativity and associativity, such as, ``The superhero whom Hawkeye killed was incredible''. 

A dual approach to combining associative and non-associative features is the (associative) Lambek calculus with brackets, developed by Morrill~\cite{Morrill1992,DBLP:journals/jolli/Morrill19} and Moortgat~\cite{Moortgat1996}. In that approach the underlying system is associative and bracket modalities control local non-associativity.  In the setting without subexponentials, Kurtonina~\cite{Kurtonina1995}  showed that the Lambek non-associative calculus $\NL$~\cite{Lambek1961OnTC} can be conservatively embedded in the calculus with brackets, which is a conservative extension of the Lambek associative calculus $\mathsf{L}$~\cite{Lambek1958} by construction. In the presence of subexponentials, the exact relationship between the two approaches remains to be investigated.

We end with some discussion about complexity.
Even when enriched by subexponentials, classical non-associative non-commutative linear logic remains conservative over intuitionistic non-associative non-commutative linear logic, for certain choices of rules or fragments of the language.

Conservativity allows one to port the undecidability result of \cite{DBLP:conf/cade/BlaisdellKKPS22} which adapts a result of Chvalovsk\'{y} \cite{Chvalovsky2015}.  This in particular would give undecidability of $\CacLL$ in full generality.

However, since strictly more is expressible in $\CacLL$ than $\acLL$, decidability results, as in Busz\-kowski's work \cite{DBLP:conf/lacl/Buszkowski16}, would be stronger and would have implications in the reverse direction.

Further, Buszkowski \cite{DBLP:conf/lacl/Buszkowski16} also shows that classical non-associative non-commutative multiplicative linear logic generates context-free grammars as a categorial grammar.  From the intuitionistic direction, subexponentials are useful tools for modeling certain linguistic phenomena, 
including specifically associative subexponentials in non-associative systems \cite{DBLP:conf/cade/BlaisdellKKPS22}.  There is much left to explore regarding the application of subexponentials to linguistic analysis in classical systems.

\noindent
{\bf Acknowledgment.} We would like to thank  Dale Miller for his suggestion that we investigate the classical version of our calculus in~\cite{DBLP:conf/cade/BlaisdellKKPS22}, which motivated the present work.

\bibliography{biblio}

%

\end{document}